\newcommand{\Real}{\mathbb{R}}
\newcommand{\C}{\mathbb{C}}
\newcommand{\supp}{\mathop{\mathrm{supp}}\nolimits}
\newcommand{\Dom}{\mathsf{D}}
\newcommand{\sii}{L^2}
\newcommand{\der}{\mathrm{d}}
\newenvironment{psmallmatrix}
  {\left(\begin{smallmatrix}}
  {\end{smallmatrix}\right)}
\theoremstyle:=definition,remark,plain\do{%
        \expandafter\g@addto@macro\csname th@\theoremstyle\endcsname{%
            \addtolength\thm@preskip\parskip
            }%
        }
\newtheorem{Theorem}{Theorem}
\newtheorem{Lemma}{Lemma}
\newtheorem{Proposition}{Proposition}
\newtheorem{Corollary}{Corollary}
\theoremstyle{definition}
\newtheorem{Remark}{Remark}
\def\OMIT#1{}
\definecolor{DarkGreen}{rgb}{0,0.5,0.1} 
\newcommand\soutD{\bgroup\markoverwith
{\textcolor{DarkGreen}{\rule[.5ex]{2pt}{1pt}}}\ULon}
\newcommand\soutP{\bgroup\markoverwith
{\textcolor{blue}{\rule[.5ex]{2pt}{1pt}}}\ULon}
\newcommand{\Hm}[1]{\leavevmode{\marginpar{\tiny%
$\hbox to 0mm{\hspace*{-0.5mm}$\leftarrow$\hss}%
\vcenter{\vrule depth 0.1mm height 0.1mm width \the\marginparwidth}%
\hbox to
0mm{\hss$\rightarrow$\hspace*{-0.5mm}}$\\\relax\raggedright #1}}}
\begin{document}
%
\title{\textbf{\LARGE
Spectral properties of two-dimensional half-space semi-Dirac semi-metals
}}
\author{Tuyen Vu }
\date{\small 
%
\begin{quote}
\centering
\emph{
Department of Mathematics, Faculty of Nuclear Sciences and 
Physical Engineering, Czech Technical University in Prague, 
Trojanova 13, 12000 Prague 2, Czechia;\\
thibichtuyen.vu@fjfi.cvut.cz.%
}
\end{quote}
27 February 2024
}
\maketitle
\vspace{-5ex} 
\begin{abstract}
\noindent
The paper deals with the semi-Dirac operator in a half-space arising  in the description of quasi-particles in quantum mechanics as well as in semi-metals materials and related structures.
It completely shows the  self-adjointness, computes the square and the spectrum of the operator. 
We also set up sufficient conditions for the existence of the point and discrete spectrum  by including some appropriate potentials in the problem and study the spectral stability properties for perturbed operators. 
 
%
%
\end{abstract}

\textbf{Keywords } {Semi-Dirac semi-metals; Linear   and 
quadratic dispersions;  Quantum bound states.} \\

%

\section{Introduction}
Semi-Dirac semi-metals are materials studied in the field of condensed matter physics  with nanostructures can be featured as a combination of linear   and 
quadratic dispersions and exhibiting  tunable electronic, optical and transport properties. In spite of garnering substantial research attention, e.g. \cite{Ba,KA,YS}, the research of semi-Dirac semi-metals  still remains a recent, fascinating and challenging topic in quantum materials.

Given a half-space domain ~$\Omega = \Real \times (0,\infty) := \Real^2_+$
see Figure~\ref{halfspace},
on which the semi-Dirac semi-metals  operator acts as 
\begin{equation}
  T_{\Real^2_+} := 
  \begin{pmatrix}
     -i \partial_y & -\partial^2_x + \delta \\
     -\partial^2_x + \delta & i \partial_y
  \end{pmatrix}
  \qquad \mbox{in} \qquad
  \sii(\Real^2_+;\C^2)
  \,,
\end{equation}
here some physical constants are assumed to be equal to $1$ by virtue of re-scaling the system and  the operator domain is defined as follows,
$$
  \Dom(T_{\Real^2_+}) = \left\{ 
  u \in H^{1}(\Real^2_+;\C^2) : \ \partial_x^2 u \in \sii(\Real^2_+;\C^2) \mbox{ and }
  u_1 = u_2 \ \mbox{on } y=0 
  \right\}
.$$
\begin{figure}[h]
  \begin{center}
  \includegraphics[scale=1.5]{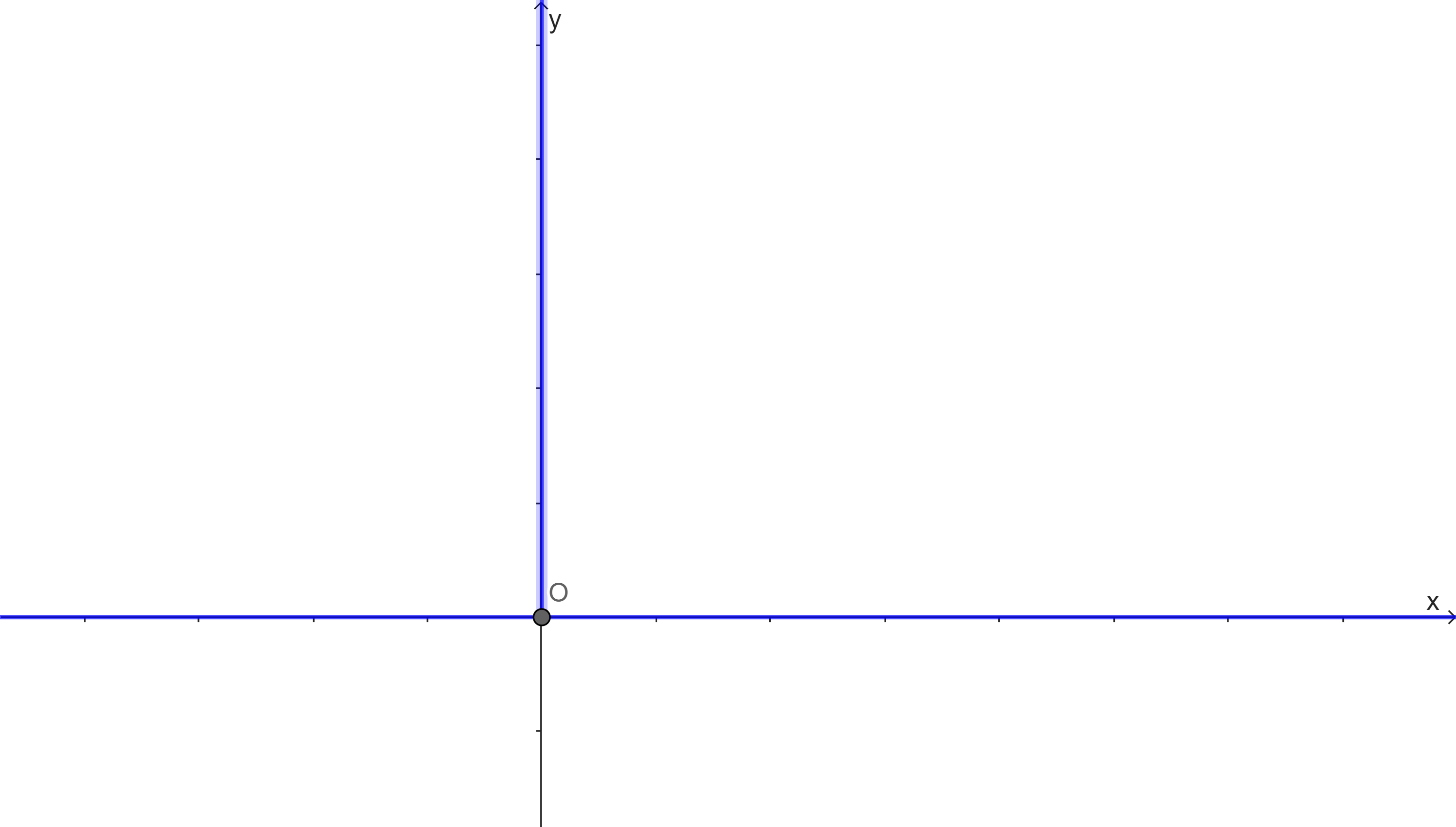}
  \caption{The half-space $\Real^2_+$.
  }
  \label{halfspace}
  \end{center}
\end{figure}

If the domain $\Omega = \Real^2$ then the corresponding operator  becomes 
\begin{equation}\label{operator on whole space}
  T_{\Real^2} := 
  \begin{pmatrix}
     -i \partial_y & -\partial^2_x + \delta \\
     -\partial^2_x + \delta & i \partial_y
  \end{pmatrix}
  \qquad \mbox{in} \qquad
  \sii(\Real^2;\C^2)
  \,,
\end{equation}
 \begin{equation}
  \Dom(T_{\Real^2}) :=  
  \Big\{
  \psi = 
  \begin{psmallmatrix}
  \psi_1 \\ \psi_2 
  \end{psmallmatrix}
  \in H^{1}(\Real^2;\C^2) : \ \partial_x^2\psi_2, \partial_x^2\psi_1 \in \sii(\Real^2) 
  \Big\} \,.
\end{equation}
As studied in \cite{KA}, $T_{\Real^2}$ is self-adjoint and its spectrum is purely absolutely continuous $$\sigma(T_{\Real^2}) = (-\infty, -\delta] \cup [\delta, \infty).$$
Being motivated by \cite{KA}, we study the self-adjointness, spectral properties of $T_{\Real^2_+}$ 
and illustrate the existence of the discrete spectrum as well as the effect of some potentials on the spectrum of the semi-Dirac semi-metals problem. 

The paper is organised as follows.  
In Section~\ref{Sec.no}, we demonstrate the formula for the expectation value of the square of the operator. The self-adjointness of the semi-Dirac semi-metals operator in a half-space
is established in Section~\ref{Sec.self-adjointness}. 
The results providing insights into  the spectral properties of the operator, especially for the existence of the discrete spectrum  and the spectral instability properties for perturbed operators are illustrated in Section \ref{Sec.proof}.

\section{The square of the semi-Dirac semi-metals operator in $\Real^2_+$}\label{Sec.no}
 It is remarkable that $\Dom(T_{\Real^2_+})$ is a Banach space endowed with the norm 
$$ \|.\|_{T_{\Real^2_+; \C^2}} := \sqrt{\|.\|^2_{H^1({\Real^2_+; \C^2})} + \|\partial_x^2 .\|^2_{\sii({\Real^2_+; \C^2})}} \, .$$

In addition, denote
$$ \|.\|_{T_{\Real^2_+}} := \sqrt{\|.\|^2_{H^1({\Real^2_+; \C})} + \|\partial_x^2 .\|^2_{\sii({\Real^2_+; \C})}} \, , $$
 $$\mathtt{H}^1(\Real^2_+) := \overline{C^\infty(\overline{\Real^2_+}; \C)}^{\|.\|_{T_{\Real^2_+}}} \,, \quad \quad \quad \quad \, \,  \mathtt{H}^1_0(\Real^2_+) := \overline{C_0^\infty(\Real^2_+; \C)}^{\|.\|_{T_{\Real^2_+}}} \,,$$
$$\mathtt{H}^1(\Real^2_+; \C^2) := \{ u = \begin{psmallmatrix} u_1 \\ u_2 \end{psmallmatrix}, u_1, u_2 \in \mathtt{H}^1(\Real^2_+)\} \,, \quad \quad
\mathtt{H}_0^1(\Real^2_+; \C^2) := \{ u = \begin{psmallmatrix} u_1 \\ u_2 \end{psmallmatrix}, u_1, u_2 \in \mathtt{H}_0^1(\Real^2_+)\} \,.$$
We have $$\mathtt{H}^1(\Real^2_+) = 
\{ u \in H^1(\Real^2_+) : \partial_x^2 u \in L^2(\Real^2_+) \}.$$
The proof for this result can be analogously found  in \cite{Adam,Grisvard}. To study the space $\mathtt{H}^1_0(\Real^2_+)$ thoroughly, we apply the approach and cut-off functions  defined in \cite{Evans}.
\begin{Lemma}\label{density}
\begin{equation}
\mathtt{H}^1_0(\Real^2_+) = 
\{ u \in H^1_0(\Real^2_+) : \partial_x^2 u \in L^2(\Real^2_+) \} \,.
\end{equation}
\end{Lemma}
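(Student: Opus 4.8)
The plan is to prove the two inclusions separately, the nontrivial one being that every $u$ in the right-hand set can be reached by $C_0^\infty(\Real^2_+;\C)$ functions in the graph norm $\|.\|_{T_{\Real^2_+}}$. The inclusion $\mathtt{H}^1_0(\Real^2_+)\subseteq\{u\in H^1_0(\Real^2_+):\partial_x^2 u\in\sii(\Real^2_+)\}$ is the easy direction: the right-hand set is closed under $\|.\|_{T_{\Real^2_+}}$, because convergence in this norm forces $H^1$-convergence (so the limit stays in the closed subspace $H^1_0$) together with $\sii$-convergence of the second $x$-derivatives (so $\partial_x^2 u\in\sii$); since $C_0^\infty(\Real^2_+;\C)$ sits inside this closed set, so does its closure. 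For the reverse inclusion I fix $u\in H^1_0(\Real^2_+)$ with $\partial_x^2 u\in\sii(\Real^2_+)$ and approximate it in three steps, combining them at the end by an $\eps/3$-type argument.

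First I would cut off to compact support. Let $\chi_R(x,y)=\eta(x/R)\eta(y/R)$ with $\eta\in C_0^\infty(\Real)$, $\eta\equiv 1$ near $0$, and set $u_R:=\chi_R u$. Then $u_R\in H^1_0$ with compact support, and the identity $\partial_x^2 u_R=(\partial_x^2\chi_R)u+2(\partial_x\chi_R)(\partial_x u)+\chi_R\,\partial_x^2 u$ shows $u_R\to u$ in $\|.\|_{T_{\Real^2_+}}$: the factors $\partial_x^2\chi_R$ and $\partial_x\chi_R$ carry the weights $R^{-2}$, $R^{-1}$ and are supported where $|x|\sim R$, so the extra terms vanish as $R\to\infty$, while $\chi_R\,\partial_x^2 u\to\partial_x^2 u$ by dominated convergence; the $H^1$ part is the classical truncation estimate.

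Second I would push the support off the boundary $\{y=0\}$, and this is the step I expect to be the crux. Since $u_R\in H^1_0$, its extension by zero $\overline{u_R}$ across $y=0$ lies in $H^1(\Real^2)$. The anisotropic point is that the zero-extension also behaves well for $\partial_x^2$: testing against $\phi\in C_0^\infty(\Real^2)$ and integrating by parts twice \emph{in the $x$-variable only} produces no boundary contribution at $y=0$, so $\partial_x^2\overline{u_R}=\overline{\partial_x^2 u_R}\in\sii(\Real^2)$. Now translate upward, $(\overline{u_R})_\tau(x,y):=\overline{u_R}(x,y-\tau)$ for $\tau>0$, which is supported in $\{y\ge\tau\}$; since translation in $y$ commutes with $\partial_x$ and is continuous on $\sii(\Real^2)$, one gets $(\overline{u_R})_\tau\to\overline{u_R}$ in $H^1(\Real^2)$ and $\partial_x^2(\overline{u_R})_\tau\to\partial_x^2\overline{u_R}$ in $\sii(\Real^2)$. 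Restricting to $\Real^2_+$, the approximant $v:=(\overline{u_R})_\tau|_{\Real^2_+}$ is compactly supported strictly inside the open half-space and is $\|.\|_{T_{\Real^2_+}}$-close to $u_R$.

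Third I would mollify: for $\eps<\tau$ the mollification $v*\rho_\eps$ lies in $C_0^\infty(\Real^2_+;\C)$, and since mollification commutes with $\partial_x^2$ and converges in both $\sii$ and $H^1$, it converges to $v$ in $\|.\|_{T_{\Real^2_+}}$. Chaining the three approximations and letting $R\to\infty$, $\tau\to0$, $\eps\to0$ in a suitable order gives $u\in\mathtt{H}^1_0(\Real^2_+)$. The only genuinely anisotropic subtlety is the second step, namely controlling $\partial_x^2$ under the zero-extension and the transversal translation; it works precisely because these operations are performed across the $y$-direction, so no trace of $\partial_x u$ on $\{y=0\}$ ever enters the computation.
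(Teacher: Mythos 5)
Your proof is correct, but it reaches the conclusion by a genuinely different route than the paper for the crucial step of removing the boundary. The paper first approximates $u$ by functions in $C^\infty(\overline{\Real^2_+})$ (using the closure characterization of $\mathtt{H}^1(\Real^2_+)$), invokes the trace theorem to see that $u(\cdot,0)=0$, derives the Hardy-type estimate $\int_\Real |u(x,y)|^2\,\der x\le 2y\int_0^y\!\!\int_\Real|\partial_y u|^2\,\der x\,\der t$, and then multiplies by a cutoff $1-\varphi(ny)$ supported away from $\{y=0\}$; the Hardy estimate is exactly what controls the commutator term $u\,\varphi_n'$ of size $n$ on a strip of width $2/n$. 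You instead truncate in space, extend by zero across $\{y=0\}$ (legitimate since $u\in H^1_0$, and compatible with $\partial_x^2$ because the integration by parts is purely tangential, as you correctly emphasize), translate upward in $y$, and mollify. Your argument trades the trace theorem and the boundary Hardy inequality for the continuity of translations on $\sii$, which makes it somewhat more elementary and makes the final smoothing step fully explicit (the paper only says the approximants ``can be modified'' into $C_0^\infty$ functions); the paper's cutoff argument, on the other hand, avoids any extension to $\Real^2$ and yields the quantitative decay of $u$ near the boundary as a by-product, which is reused elsewhere in that style of argument. The one place where your write-up is slightly informal is the identity $\partial_x^2\overline{u_R}=\overline{\partial_x^2 u_R}$: to integrate by parts ``in $x$ only'' one should note, via a Fubini argument with product test functions, that for a.e.\ fixed $y$ the slice $u_R(\cdot,y)$ has weak second derivative $(\partial_x^2u_R)(\cdot,y)$; this is standard and does not constitute a gap.
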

\begin{proof}
First of all, we easily check that 
$$\mathtt{H}^1_0(\Real^2_+) \subset
\{ u \in H^1_0(\Real^2_+) : \partial_x^2 u \in L^2(\Real^2_+) \} \,.$$

For the inverse inclusion, take a function $u \in H^1_0(\Real^2_+)$ such that $ \partial_x^2 u \in L^2(\Real^2_+)$ then there exists a sequence $\{u_n\}_{n\geq 1} \subset C^\infty(\overline{\Real^2_+})$: $ u_n \xrightarrow[n \to \infty]{} u$ in $\mathtt{H}^1(\Real^2_+)$. By virtue of the trace theorem, we have $u_n(x,0) \xrightarrow[n \to \infty]{} u(x,0)$ in $H^\frac{1}{2}(\Real)$.

For each $x \in \Real, y_k \in [0, \infty)$, we estimate
\begin{equation}
|u_n(x,y_k)| \leq |u_n(x,0)| + \int_0^{y_k} |\partial_y u_n(x,t)| \der t \,.
\end{equation}
Hence, 
\begin{equation}
\int_\Real |u_n(x,y_k)|^2 \der x \leq 2\Big( \int_\Real |u_n(x,0)|^2
\der x + y_k \int_0^{y_k} \int_\Real |\partial_y u_n(x,t)|^2 \der x \der t \Big) \,.
\end{equation}
Letting $ n \rightarrow \infty$, we have
\begin{equation}\label{bdt}
\int_\Real |u(x,y_k)|^2 \der x \leq 2 y_k \int_0^{y_k} \int_\Real |\partial_y u(x,t)|^2 \der x \der t 
\end{equation}
for a.e. $y_k > 0$.

Take $  \varphi(y) \in C^\infty[0, \infty) , 0\leq \varphi \leq 1$
$$ \varphi(y) : =  \left\{\begin{array}{rcl}
	&1 \quad \quad &\mbox{ if } y \in [0, 1]  \,,\\
    &0  \quad\quad &\mbox{ if } y \geq 2  \,.\\
\end{array}\right.$$ 
Putting $$ \varphi_n(y) : = \varphi(n y), \quad \quad \quad v_n := u(x)(1-\varphi_n) \,.$$
Thus, 
$$ \partial_x v_n = \partial_x u (1-\varphi_n), \quad \quad \quad \partial_x^2 v_n = \partial_x^2 u (1-\varphi_n) $$
and $$ \partial_y v_n = \partial_y u (1-\varphi_n) - u (\varphi_n)' \,.$$
As a result, 
$$ \int_{\Real^2_+} |\partial_x v_n - \partial_x u|^2 \leq \int_{\Real^2_+} |\varphi_n|^2 |\partial_x u|^2 \der x \der y \xrightarrow[n \to \infty]{} 0,$$
 $$  \int_{\Real^2_+} |\partial_x^2 v_n - \partial_x^2 u|^2 \leq \int_{\Real^2_+} |\varphi_n|^2 |\partial_x^2 u|^2 \der x \der y \xrightarrow[n \to \infty]{} 0$$
due to the dominated convergence theorem.
Moreover, the support of $\varphi_n$, $\supp\varphi_n \subset \Real \times [0, \frac{2}{n}]$ and taking \eqref{bdt} into account, we have
\begin{equation}
\begin{aligned}
\int_{\Real^2_+} |\partial_y v_n - \partial_y u|^2 \der x \der y &\leq \int_{\Real^2_+} |\varphi_n|^2 |\partial_y u|^2  \der x \der y + \int_0^\infty \int_\Real  |u|^2 |\varphi_n'|^2 \der x \der y \,,\\
&= \int_{\Real^2_+} |\varphi_n|^2 |\partial_y u|^2  \der x \der y + \int_0^\frac{2}{n} \int_\Real n^2 |u|^2 |\varphi'|^2 \der x \der y \,,\\
&\leq \int_{\Real^2_+} |\varphi_n|^2 |\partial_y u|^2  \der x \der y + 2 n^2 \sup_{[0,\infty]}|\varphi'|^2 \Big( \int_0^\frac{2}{n} y \der y \Big) \Big( \int_0^\frac{2}{n} \int_\Real |\partial_y u (x,y)|^2 \der x \der y \Big) \,,\\
&\xrightarrow[n \to \infty]{} 0 \,.
\end{aligned}
\end{equation} 

Therefore, $v_n \xrightarrow[n \to \infty]{} u$ in $\mathtt{H}^1(\Real^2_+)$. In addition, $ v_n = 0$ if $0 \leq y \leq \frac{1}{n}$. Thus, we can modify the functions $u_n \in C^\infty_0(\Real^2_+)$ such that $ u_n \xrightarrow[n \to \infty]{} u$ in $\mathtt{H}^1(\Real^2_+)$. It deduces that $u \in \mathtt{H}^1_0(\Real^2_+)$. It concludes the proof.

\end{proof}

Now we would like to to establish the square of the operator as analogously  described in \cite{Tuyen}. To write computations more convenient, we denote $\|.\|$ stands for the $\sii(\Real^2_+)$-norm.
\begin{Theorem}\label{formula}
For every $u = \begin{psmallmatrix} u_1 \\ u_2 \end{psmallmatrix} \in \Dom(T_{\Real^2_+})$,
\begin{equation}\label{norm}   
  \|T_{\Real^2_+} u\|^2   = \|\partial_y u\|^2 + \|\partial_x^2 u\|^2 + 2 \delta\|\partial_x u\|^2 
  + \delta^2 \|u\|^2.
\end{equation}
\end{Theorem}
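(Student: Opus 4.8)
The plan is to establish the identity first for functions that are smooth up to the boundary and satisfy the interface condition, where every integration by parts is licit, and then to extend it by density. By Lemma~\ref{density} and the definition of the graph norm $\|\cdot\|_{T_{\Real^2_+;\C^2}}$, each component of $u\in\Dom(T_{\Real^2_+})$ lies in $\mathtt{H}^1(\Real^2_+)$ and is therefore approximable, in that norm, by functions in $C^\infty(\overline{\Real^2_+})$; both sides of \eqref{norm} are continuous with respect to $\|\cdot\|_{T_{\Real^2_+;\C^2}}$, so it suffices to prove \eqref{norm} for smooth $u=\begin{psmallmatrix} u_1 \\ u_2 \end{psmallmatrix}$ with $u_1=u_2$ on $\{y=0\}$. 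Writing $A:=-\partial_x^2+\delta$, so that $(T_{\Real^2_+}u)_1=-i\partial_y u_1+Au_2$ and $(T_{\Real^2_+}u)_2=Au_1+i\partial_y u_2$, I then expand $\|T_{\Real^2_+}u\|^2=\|(T_{\Real^2_+}u)_1\|^2+\|(T_{\Real^2_+}u)_2\|^2$, splitting each summand into its two diagonal squares plus a cross term $2\Re\langle\cdot,\cdot\rangle$.

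The diagonal contributions are routine. The kinetic squares give $\|-i\partial_y u_1\|^2+\|i\partial_y u_2\|^2=\|\partial_y u\|^2$ with no manipulation. For the potential squares $\|Au_j\|^2$ I expand $A=-\partial_x^2+\delta$ and integrate by parts in $x$ over the full line $\Real$, where no boundary terms arise; using $\langle-\partial_x^2 u_j,u_j\rangle=\|\partial_x u_j\|^2$ this yields $\|\partial_x^2 u_j\|^2+2\delta\|\partial_x u_j\|^2+\delta^2\|u_j\|^2$, and summing over $j$ produces $\|\partial_x^2 u\|^2+2\delta\|\partial_x u\|^2+\delta^2\|u\|^2$, i.e. the fourth-order and $\delta$-dependent terms of the right-hand side.

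The crux is the cross term $2\Re\big(\langle-i\partial_y u_1,Au_2\rangle+\langle Au_1,i\partial_y u_2\rangle\big)$, which is where the half-space geometry and the matching condition enter. I would integrate by parts once in $x$ to move a single $\partial_x$ off the second-order factor, after which the two contributions reorganize into the $y$-derivatives $\partial_y(\partial_x u_1\,\partial_x\overline{u_2})$ and $\delta\,\partial_y(u_1\overline{u_2})$; integrating these in $y$ collapses them to interface integrals over $\{y=0\}$. The decisive point is that $u_1=u_2$ on $\{y=0\}$ forces the tangential derivatives to agree, $\partial_x u_1=\partial_x u_2$ there, so that both interface integrands, $\partial_x u_1\,\partial_x\overline{u_2}$ and $u_1\overline{u_2}$, are pointwise real; since these interface integrals appear multiplied by $i$, they are purely imaginary and drop out upon taking real parts.

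The step deserving the most care is precisely this interface bookkeeping: one must justify the mixed $x$- and $y$-integration by parts on the approximating functions, track the sign coming from the outward boundary at $y=0$, and observe that it is the \emph{consequence} $\partial_x u_1=\partial_x u_2$ of the scalar condition (not the condition itself) that renders the boundary terms real and hence inert under $\Re$. A secondary technical point, to be addressed in the reduction, is that the approximating smooth sequences must respect the interface condition, which follows from the trace theorem together with the density statements. Once the interface terms are shown to vanish, adding the diagonal contributions reproduces the right-hand side of \eqref{norm}, and the density argument of the first step extends the identity to all of $\Dom(T_{\Real^2_+})$.
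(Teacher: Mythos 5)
Your proposal is correct and follows essentially the same route as the paper: reduce by density to functions smooth up to the boundary that satisfy the matching condition, expand the square, and observe that after integration by parts the only surviving interface terms at $y=0$ are $i$ times the real quantities $|\partial_x u_2|^2$ and $|u_1|^2$ (because $u_1=u_2$, hence $\partial_x u_1=\partial_x u_2$, on the boundary), so they vanish under $\Re$. The one step you leave implicit --- constructing smooth approximants that actually satisfy $u_1=u_2$ on $\{y=0\}$ --- is carried out in the paper by approximating $u_1$ in $\mathtt{H}^1(\Real^2_+)$ by $C^\infty(\overline{\Real^2_+})$ functions and $u_2-u_1\in\mathtt{H}^1_0(\Real^2_+)$ by $C_0^\infty(\Real^2_+)$ functions via Lemma~\ref{density}, then recombining; invoking ``the trace theorem together with the density statements'' alone does not quite deliver this.
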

\begin{proof}  First of all, we will show that \begin{equation}\label{dense} C^\infty(\Real^2) \cap \Dom(T_{\Real^2_+}) \mbox{ is a core of } T_{\Real^2_+}.
\end{equation}
Recall that $\mathtt{H}^1(\Real^2_+) := \overline{C^\infty(\overline{\Real^2_+})}^{\|.\|_{T_{\Real^2_+}}}$
 then for each $u \in \Dom(T_{\Real^2_+})$
there exists a sequence $\{u_{1n}\}_{n \geq 1} \subset C^\infty(\overline{\Real^2_+})$ such that $ u_{1n} \xrightarrow[n \to \infty]{} u_1$ in $\mathtt{H}^1(\Real^2_+)$. 

Putting $v := u_2 - u_1$ then it follows that $v \in \mathtt{H}^1_0(\Real^2_+)$. Taking Lemma \ref{density} into account, there exists a sequence $\{v_n\}_{n\geq 1} \in C^\infty_0(\Real^2_+)$
such that $ v_{n} \xrightarrow[n \to \infty]{} v$ in $\mathtt{H}^1(\Real^2_+)$. Hence, we deduce that $ u_{1n} + v_{n} \xrightarrow[n \to \infty]{} u_{2}$ in $\mathtt{H}^1(\Real^2_+)$. As a result, the smooth sequence $\{ w_n : = \begin{psmallmatrix} u_{1n} \\ u_{1n} + v_n \end{psmallmatrix} \}_{n\geq 1} \xrightarrow[n \to \infty]{} u$ in $\Dom(T_{\Real^2_+})$. As a result, $ T_{\Real^2_+}w_n \xrightarrow[n \to \infty]{} T_{\Real^2_+}u$ in $L^2(\Real^2_+)$. Therefore, \eqref{dense} holds and thus we can compute the square of the operator acting on the smooth functions without loss of generality. 
 We have

\begin{equation*}
\begin{aligned}
\|T_{\Real^2_+} u \|^2 
&= \|-i \partial_y u_{1} + (-\partial^2_x + \delta) u_{2}\|^2 + \|(-\partial^2_x + \delta)u_{1} + i \partial_y u_{2}\|^2 \,,\\
&= \|-i \partial_y u_{1}\|^2 + \|(-\partial^2_x + \delta) u_{2}\|^2 + 2\Re (-i \partial_y u_1, (-\partial^2_x + \delta) u_2)\\
& \quad + \|(-\partial^2_x + \delta)u_{1}\|^2 + \|i \partial_y u_{2}\|^2 
+ 2 \Re ((-\partial^2_x + \delta)u_{1}, i \partial_y u_{2}) \,,\\
&= \|\partial_y u\|^2 + \delta^2 \|u\|^2 + \|\partial_x^2 u\|^2 + 2\delta \Re (-\partial_x^2 u_{1}, u_{1}) + 2\delta \Re (-\partial_x^2 u_{2}, u_{2}) \\
& \quad + 2\Re (-i \partial_y u_{1}, (-\partial^2_x + \delta) u_{2}) + 2 \Re ((-\partial^2_x + \delta)u_{1}, i \partial_y u_{2})\,,\\
&= \|\partial_y u\|^2 + \delta^2 \|u\|^2 + \|\partial_x^2 u\|^2 + 2\delta \|\partial_x u\|^2 \\
& \quad + 2\Re (-i \partial_y u_{1}, (-\partial^2_x + \delta) u_{2}) + 2 \Re ((-\partial^2_x + \delta)u_{1}, i \partial_y u_{2})\,.
\end{aligned}
\end{equation*}
Regarding the boundary conditions and integration by parts, we estimate
\begin{equation*}
\begin{aligned}
\Re (-i \partial_y u_{1}, (-\partial^2_x + \delta) u_{2}) &= -\Re\int_{\Real^2_+} i \partial_y \overline{u_{1}} \partial^2_x u_{2} \ \der x \der y + \delta \Re \int_{\Real^2_+} i \partial_y \overline{u_{1}} u_{2} \ \der x \der y \,, \\
&= \Re \int_{\Real^2_+} i \overline{u_1} \partial_y\partial_x^2 u_2 \der x \der y + \Re i \int_\Real (\overline{u_1} \partial_x^2 u_2)|_{y=0} \der x \\
& \quad - \delta \Re \int_{\Real^2_+} i  \overline{u_{1}} \partial_y u_{2} \ \der x \der y  - \delta \Re \int_\Real i (\overline{u_1} u_2)|_{y=0} \der x    \,, \\
&= \Re \int_{\Real^2_+} i \overline{u_1} \partial_y\partial_x^2 u_2 \der x \der y - \Re i \int_\Real |\partial_x u_2|^2|_{y=0} \der x \\
&\quad - \delta \Re \int_{\Real^2_+} i  \overline{u_{1}} \partial_y u_{2} \ \der x \der y + \delta \Re i \int_\Real |u_1|_{y=0} \der x \,,\\
&= \Re \int_{\Real^2_+} i \partial_x^2\overline{u_1} \partial_y u_2 \der x \der y - \delta \Re \int_{\Real^2_+} i  \overline{u_{1}} \partial_y u_{2} \ \der x \der y \,,\\
&= - \Re ((-\partial^2_x + \delta)u_{1}, i \partial_y u_{2})\,.
\end{aligned}
\end{equation*}

Combining all previous computations, we get the desired formula. This completes the proof.

\end{proof}
\begin{Remark}
For other domains, e.g. rectangular domains, the density \eqref{dense} may not hold due to the fact that the validity of Lemma \ref{density} is not true for general sets.
\end{Remark}

\section{The self-adjointness of the semi-Dirac semi-metals operator}\label{Sec.self-adjointness}
First of all, we study on the closeness of $T_{\Real^2_+}$. The result can be directly gained by using the definition of the operator closedness. 
\begin{Proposition}
$T_{\Real^2_+}$ is a closed operator.
\end{Proposition}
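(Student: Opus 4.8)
The plan is to verify closedness directly from its sequential characterisation: given a sequence $\{u_n\} \subset \Dom(T_{\Real^2_+})$ with $u_n \to u$ and $T_{\Real^2_+} u_n \to v$ in $\sii(\Real^2_+;\C^2)$, I must show that $u \in \Dom(T_{\Real^2_+})$ and $T_{\Real^2_+} u = v$. The engine driving the argument is the identity \eqref{norm} from Theorem~\ref{formula}, which converts convergence of $\{T_{\Real^2_+}u_n\}$ into control over every individual derivative norm.

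First I would apply \eqref{norm} to the differences $w_{n,m} := u_n - u_m$, which again lie in $\Dom(T_{\Real^2_+})$ by linearity, to get
\begin{equation*}
\|T_{\Real^2_+} w_{n,m}\|^2 = \|\nabla w_{n,m}\|^2 + \|\partial_x^2 w_{n,m}\|^2 + 2\delta\|\partial_x w_{n,m}\|^2 + \delta^2\|w_{n,m}\|^2 .
\end{equation*}
Regrouping the $x$-derivative terms as $\|\partial_x^2 w_{n,m}\|^2 + 2\delta\|\partial_x w_{n,m}\|^2 + \delta^2\|w_{n,m}\|^2 = \|(-\partial_x^2 + \delta)w_{n,m}\|^2 \ge 0$ shows, regardless of the sign of $\delta$, that both $\|\nabla w_{n,m}\| \le \|T_{\Real^2_+} w_{n,m}\|$ and $\|\partial_x^2 w_{n,m}\| \le \|T_{\Real^2_+} w_{n,m}\| + |\delta|\,\|w_{n,m}\|$. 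Since $\{u_n\}$ and $\{T_{\Real^2_+}u_n\}$ are convergent, hence Cauchy, the right-hand sides tend to $0$; consequently $\{u_n\}$ is Cauchy in $H^1(\Real^2_+;\C^2)$ and $\{\partial_x^2 u_n\}$ is Cauchy in $\sii(\Real^2_+;\C^2)$.

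Next I would pass to the limit. By completeness of $H^1$ the sequence converges there, so $u \in H^1(\Real^2_+;\C^2)$ with $u_n \to u$ in $H^1$, and closedness of the weak derivative $\partial_x^2$ then identifies the $\sii$-limit of $\{\partial_x^2 u_n\}$ with $\partial_x^2 u$, giving $\partial_x^2 u \in \sii(\Real^2_+;\C^2)$. The interface condition $u_1 = u_2$ on $y=0$ survives the limit because the trace map $H^1(\Real^2_+) \to H^{1/2}(\Real)$ is continuous: $u_{1n} - u_{2n} \to u_1 - u_2$ in $H^{1/2}(\Real)$ while each left side vanishes. Hence $u \in \Dom(T_{\Real^2_+})$. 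Finally, the same derivative bounds show that $T_{\Real^2_+}$ is continuous from $(\Dom(T_{\Real^2_+}), \|\cdot\|_{T_{\Real^2_+;\C^2}})$ into $\sii$, so $T_{\Real^2_+}u_n \to T_{\Real^2_+}u$, and comparison with $T_{\Real^2_+}u_n \to v$ forces $T_{\Real^2_+}u = v$.

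The only genuinely delicate point is the passage of the boundary condition to the limit: it is essential that the convergence extracted from \eqref{norm} is in $H^1$ and not merely in $\sii$ with an $\sii$-bound on $\partial_x^2$, since only then does trace continuity apply and keep the limit inside the domain. A secondary nuisance is the indefinite sign of the cross term $2\delta\|\partial_x w\|^2$, which is why I would complete the square into $\|(-\partial_x^2+\delta)w\|^2$ rather than discard terms; alternatively one may simply note that \eqref{norm} renders the graph norm equivalent to $\|\cdot\|_{T_{\Real^2_+;\C^2}}$ and invoke the Banach-space property of $\Dom(T_{\Real^2_+})$ recorded at the start of Section~\ref{Sec.no}.
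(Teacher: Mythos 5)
Your proof is correct and follows the same sequential-closedness route as the paper; the only real difference is that you make explicit two steps the paper leaves implicit, namely deriving the Cauchy property of $\{u_n\}$ in $H^1(\Real^2_+;\C^2)$ and of $\{\partial_x^2 u_n\}$ in $\sii$ from the identity \eqref{norm} (via completing the square to handle the sign of $\delta$), and verifying that the interface condition $u_1=u_2$ on $y=0$ survives the limit through the continuity of the trace map. Both additions strengthen the argument, since the paper's own proof simply asserts the $H^1$-convergence and does not address the boundary condition at all.
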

\begin{proof}
Take a sequence $\{u_n\}_{n \geq 1} \in \Dom(T_{\Real^2_+}) :u_n \xrightarrow[n \to \infty]{} u $ and $T_{\Real^2_+}u_n \xrightarrow[n \to \infty]{} v$. Then we have $u_n \xrightarrow[n \to \infty]{} u \in H^1(\Real^2_+)$ and there exists $w \in L^2(\Real^2_+)$ such that $\partial_x^2 u_n \xrightarrow[n \to \infty]{} w$ in $L^2(\Real^2_+)$. In addition, for all $\varphi \in C^\infty_0(\Real^2_+)$
\begin{equation}
\left\langle u, \partial_x^2\varphi\right\rangle = \lim_{n\rightarrow\infty}\left\langle u_n, \partial_x^2\varphi\right\rangle = \lim_{n\rightarrow\infty}\left\langle\partial_x^2 u_n, \varphi\right\rangle = \left\langle w, \varphi\right\rangle \,,
\end{equation}
where $\langle\cdot,\cdot\rangle $ denotes the duality bracket of distributions.
It deduces that $\partial^2_x u = w\in L^2(\Real^2_+)$. Therefore, $u \in \Dom(T_{\Real^2_+})$ and $T_{\Real^2_+}u_n \xrightarrow[n \to \infty]{} T_{\Real^2_+}u$. Thus, $T_{\Real^2_+}$ is closed. The proof is completed.
\end{proof}

Recall that $T_{\Real^2}$ is self-adjoint. To point out the self-adjointness of $T_{\Real^2_+}$, we employ the extension map as the approach described in \cite{Arrizibalaga-LeTreust-Raymond_2018}. We will establish the trace continuous maps  

Putting
\begin{equation}
  T := 
  \begin{pmatrix}
     -i \partial_y & -\partial^2_x + \delta \\
     -\partial^2_x + \delta & i \partial_y
  \end{pmatrix}
  \,,
\end{equation}
acts as a differential operator with  distributional sense.

\begin{Lemma}\label{denseM}
Let $\mathcal{M} := \{ u \in L^2(\Real^2_+; \C^2)| Tu \in L^2(\Real^2_+; \C^2)\}$ endowed with the norm 
$$\|u\|_\mathcal{M}^2 :=  \|u\|_{L^2(\Real^2_+)}^2 + \|Tu\|_{L^2(\Real^2_+)}^2$$
 then $\mathcal{M}$ is a Hilbert space and $C^\infty(\overline{\Real^2_+}; \C^2)$ is dense in $\mathcal{M}$.
\end{Lemma}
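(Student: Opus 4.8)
The plan is to treat the two assertions separately: completeness is routine, while the density of smooth functions up to the boundary is the crux. For the Hilbert space claim I would first note that $\|\cdot\|_\mathcal{M}$ is the graph norm of the differential expression $T$ understood in the distributional sense, and that it is induced by the inner product $(u,v)_\mathcal{M} := (u,v)_{\sii(\Real^2_+)} + (Tu,Tv)_{\sii(\Real^2_+)}$. To prove completeness, take a Cauchy sequence $\{u_n\}$ in $\mathcal{M}$; then $u_n \to u$ and $Tu_n \to v$ in $\sii(\Real^2_+;\C^2)$ for some $u,v$. Since $\sii$-convergence implies convergence in the space of distributions and each entry of $T$ is a constant-coefficient differential operator, hence continuous on distributions, passing to the distributional limit yields $Tu = v$. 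Thus $u \in \mathcal{M}$ and $u_n \to u$ in $\mathcal{M}$, which is exactly the closedness of the maximal operator and gives the Hilbert space structure.

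For the density of $C^\infty(\overline{\Real^2_+};\C^2)$ I would follow the regularisation strategy of \cite{Arrizibalaga-LeTreust-Raymond_2018}: first translate into the interior, then mollify, exploiting that $T$ has constant coefficients and therefore commutes with both operations. Fix $u \in \mathcal{M}$. For $\tau > 0$ set $u_\tau(x,y) := u(x,y+\tau)$; this is defined and square-integrable on the enlarged half-space $\{y > -\tau\}$, and since translation commutes with $T$ we have $Tu_\tau = (Tu)_\tau$ there. By strong continuity of translations in $\sii$, both $u_\tau \to u$ and $(Tu)_\tau \to Tu$ in $\sii(\Real^2_+;\C^2)$ as $\tau \to 0^+$, so $u_\tau \to u$ in $\mathcal{M}$. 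Shifting \emph{upward} (into the domain) rather than downward is essential: no extension by zero across $y=0$ is needed, which is precisely what would otherwise create a singular boundary contribution under $\partial_y$.

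Next I would mollify the shifted function. Let $\rho_\eps$ be a standard mollifier of radius $\eps$ and put $u_{\tau,\eps} := \rho_\eps * u_\tau$. Choosing $\eps < \tau$ guarantees that the convolution at any point of $\overline{\Real^2_+}$ only samples $u_\tau$ inside its domain of definition $\{y > -\tau\}$, so $u_{\tau,\eps}$ is well defined and smooth on a neighbourhood of $\overline{\Real^2_+}$, whence its restriction lies in $C^\infty(\overline{\Real^2_+};\C^2)$. Because convolution commutes with the constant-coefficient operator $T$, one has $Tu_{\tau,\eps} = \rho_\eps * (Tu)_\tau$, and by the usual approximation property of mollifiers $u_{\tau,\eps} \to u_\tau$ and $Tu_{\tau,\eps} \to (Tu)_\tau$ in $\sii$ as $\eps \to 0^+$; moreover $u_{\tau,\eps} \in \mathcal{M}$ since mollification preserves square-integrability. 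A diagonal argument in $\tau$ and $\eps$ then produces a sequence in $C^\infty(\overline{\Real^2_+};\C^2)$ converging to $u$ in $\mathcal{M}$.

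The one genuine obstacle I anticipate is the correct handling of the boundary $y=0$: the construction must avoid introducing a jump there, which $\partial_y$ would turn into a surface Dirac mass and spoil membership in $\mathcal{M}$. This is exactly why the translation is taken in the inward normal direction before mollifying, with the radius constraint $\eps < \tau$ ensuring the approximants remain smooth up to and across the boundary while staying in $\mathcal{M}$. Everything else, namely the commutation of $T$ with translation and convolution and the $\sii$-continuity of these operations, is routine and follows from the constant-coefficient nature of $T$.
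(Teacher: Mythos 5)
Your proposal is correct, but it proves the density by a genuinely different route than the paper. The paper follows the duality argument of the Dirac-operator literature (\cite{Benguria-Fournais-Stockmeyer-Bosch_2017b}): take $f \in \mathcal{M}$ with $(f,g)_{\mathcal{M}} = 0$ for all $g \in C^\infty(\overline{\Real^2_+};\C^2)$, set $h := Tf$, so that $Th = -f$ distributionally; extend $f,h$ by zero to $\Real^2$, use the Fourier transform to conclude $\tilde{h} \in H^1(\Real^2;\C^2)$ with $\partial_x^2 \tilde{h} \in \sii(\Real^2)$, deduce from the trace theorem that $h$ vanishes at $y=0$, hence $h \in \mathtt{H}^1_0(\Real^2_+;\C^2)$, and finally approximate $h$ by $C^\infty_0$ functions (this is where the paper's Lemma~\ref{density} enters) to integrate by parts without boundary terms and conclude $(f,u)_{\mathcal{M}} = 0$ for all $u \in \mathcal{M}$, whence $f=0$. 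Your argument is instead constructive: inward translation $u_\tau(x,y) = u(x,y+\tau)$ followed by mollification with radius $\eps < \tau$, exploiting that the constant-coefficient matrix operator $T$ commutes with both operations and that translations and mollifiers are strongly continuous on $\sii$ (applied to the zero-extension of $u$, which is legitimate since only values with $y+\tau>0$ are sampled). I checked the steps: $Tu_\tau = (Tu)_\tau$ on $\{y>-\tau\}$, the constraint $\eps<\tau$ makes $u_{\tau,\eps}$ smooth on a neighbourhood of $\overline{\Real^2_+}$ and keeps it in $\mathcal{M}$ by Young's inequality, and the diagonal argument closes the proof; the Hilbert-space part (closedness of the maximal operator via distributional limits) matches the paper, which simply calls it obvious. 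What each approach buys: yours is more elementary and self-contained — it needs neither the trace theorem, nor the Fourier-transform regularity step, nor the $\mathtt{H}^1_0$-density lemma — but it hinges on the special geometry of the half-space, namely its invariance under translation in a single inward direction; the paper's duality argument is heavier but is the one that survives on curved or bounded domains, where no global inward translation exists and one would otherwise need a partition of unity and local deformations.
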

\begin{proof}
The proof is similar as that for the Dirac operators, see  \cite{Benguria-Fournais-Stockmeyer-Bosch_2017b}. Indeed, it is obvious that $\mathcal{M}$ is a Hilbert space encoded with a scalar product 
$$(u,v)_{\mathcal{M}} = (u,v)_{L^2(\Real^2_+)} + (Tu, Tv)_{L^2(\Real^2_+)} \,.$$ 
To prove $C^\infty(\overline{\Real^2_+}; \C^2)$ is dense in $\mathcal{M}$, we take $f \in \mathcal{M}$ such that $(f,g)_{\mathcal{M}}=0$ for all $g \in C^\infty(\overline{\Real^2_+}; \C^2)$. Putting $h := Tf$ then $Th = -f$ in the sense of distributions.
Denote by $\tilde{f}, \tilde{h}$  the trivial extensions of $f,h$ to $\Real^2$. For all $\phi \in C^\infty_0(\Real^2; \C^2)$ we have
$$ <T\tilde{h}, \phi> =   (\tilde{h},T\phi)_{L^2(\Real^2)} = (h,T\phi)_{L^2(\Real^2_+)}= (-f,\phi)_{L^2(\Real^2_+)}=(-\tilde{f},\phi)_{L^2(\Real^2)}.$$
It yields that $T\tilde{h} = -\tilde{f} \in L^2(\Real^2; \C^2)$. Therefore, $\tilde{h} \in H^1(\Real^2; \C^2), \partial_x^2 \tilde{h} \in L^2(\Real^2)$ by means of the Fourier transform.
As a result, $h \in H^1(\Real^2_+; \C^2), \partial^2_x h \in L^2(\Real^2_+; \C^2)$ and $h=0$ on $y=0$ due to the trace theorem. Equivalently, $h \in \mathtt{H}^1_0(\Real^2_+; \C^2)$ then there exists a sequence $h_n \in C^\infty_0(\Real^2_+; \C^2) : h_n \xrightarrow[n \to \infty]{} h$ in $\mathtt{H}_0^1(\Real^2_+; \C^2)$.

For all $u \in \mathcal{M}$, we have
\begin{equation}
\begin{aligned}
(f,u)_{\mathcal{M}} &= (f,u)_{L^2(\Real^2_+)} + (h, Tu)_{L^2(\Real^2_+)} 
= (f,u)_{L^2(\Real^2_+)} + \lim_{n\rightarrow \infty} (h_n, Tu)_{L^2(\Real^2_+)} \,,\\
&= (f,u)_{L^2(\Real^2_+)} + \lim_{n\rightarrow \infty} \overline{<Tu, h_n>} \,,\\
&= (f,u)_{L^2(\Real^2_+)} + \lim_{n\rightarrow \infty} (Th_n, u)_{L^2(\Real^2_+)} 
= (f,u)_{L^2(\Real^2_+)} +  (Th, u)_{L^2(\Real^2_+)} \,,\\
&= 0 \,.
\end{aligned}
\end{equation}
Hence, $f =0$ and thus, $C^\infty(\overline{\Real^2_+}; \C^2)$ is dense in $\mathcal{M}$. The proof is completed.

\end{proof}

In order to establish a continuous trace map for $\mathcal{M}$, we recall the definition of Fourier transform of a function $f(t) \in L^2(\Real^N)$
\begin{equation*}
\hat{f}(\xi) = \int_{\Real^N} e^{-ix\xi} f(t) \der t \,,
\end{equation*}
and its inverse transform
\begin{equation*}
f(t) = \frac{1}{(2\pi)^N}\int_{\Real^N} e^{ix\xi} \hat{f}(\xi) \der \xi \,.
\end{equation*}
We have the Plancherel identity, see \cite{JN}
$$\int_{\Real^N} |\hat{f}(\xi)|^2 \der \xi = (2\pi)^N\int_{\Real^N} |f(x)|^2 \der x \,.$$

Putting $$\mathtt{H}^{\frac{1}{2}} := \{ u(x) \in L^2(\Real; \C) : \int_{\Real}|\hat{u}(\xi_1)|^2(1 + |\xi_1|^2 + |\xi_1|^4)^{\frac{1}{2}} \der \xi_1 := \|u\|^2_{\mathtt{H}^{\frac{1}{2}}} < \infty \}.$$
Then $\mathtt{H}^{\frac{1}{2}}$ is also a Hilbert space and its dual space is 
$$\mathtt{H}^{-\frac{1}{2}} := \{ u(x) \in L^2(\Real; \C) : \int_{\Real}|\hat{u}(\xi_1)|^2(1 + |\xi_1|^2 + |\xi_1|^4)^{-\frac{1}{2}} \der \xi_1 := \|u\|^2_{\mathtt{H}^{-\frac{1}{2}}} < \infty \}.$$
Denote 
$$\mathtt{H}^{\frac{1}{2}}_{\C^2} := \{ u(x) = \begin{psmallmatrix} u_1 \\ u_2 \end{psmallmatrix}, u_1, u_2 \in \mathtt{H}^{\frac{1}{2}} \} \,,$$
$$\mathtt{H}^{-\frac{1}{2}}_{\C^2} := \{ u(x) = \begin{psmallmatrix} u_1 \\ u_2 \end{psmallmatrix}, u_1, u_2 \in \mathtt{H}^{-\frac{1}{2}} \} \,,$$
we have the following theorem

\begin{Theorem}\label{trace0}
For every $u \in \mathtt{H}^1(\Real^2_+)$, we have $u(x,0) \in \mathtt{H}^{\frac{1}{2}}$ and $\|u(x,0)\|_{\mathtt{H}^{\frac{1}{2}}} \leq C \|u\|_{\mathtt{H}^1(\Real^2_+)}$ with $C$ is a constant. Reversely,
for each $g \in \mathtt{H}^{\frac{1}{2}}$ there exists a function $u \in \mathtt{H}^1(\Real^2_+)$ and a constant $M > 0$ such that $ g= u(x,0)$ and $ \|u\|_{\mathtt{H}^1(\Real^2_+)}\leq M  \|u(x,0)\|_{\mathtt{H}^{\frac{1}{2}}}$.
\end{Theorem}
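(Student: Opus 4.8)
The plan is to pass to the partial Fourier transform in the $x$ variable and to recognise $w(\xi_1) := (1+\xi_1^2+\xi_1^4)^{\frac{1}{2}}$ as the natural trace symbol attached to the \emph{anisotropic} norm $\|\cdot\|_{\mathtt{H}^1(\Real^2_+)}$, in which $x$ carries two derivatives and $y$ only one. Writing $\hat u(\xi_1,y)$ for the Fourier transform of $u(\cdot,y)$ in $x$ and applying Plancherel's identity in $x$, one obtains, up to the fixed Plancherel constant,
$$
\|u\|_{\mathtt{H}^1(\Real^2_+)}^2 = \int_\Real\int_0^\infty\Big[(1+\xi_1^2+\xi_1^4)\,|\hat u(\xi_1,y)|^2 + |\partial_y\hat u(\xi_1,y)|^2\Big]\,\der y\,\der\xi_1 \,,
$$
since the factor $1+\xi_1^2+\xi_1^4 = w(\xi_1)^2$ collects exactly the contributions of $\|u\|^2$, $\|\partial_x u\|^2$ and $\|\partial_x^2 u\|^2$, while the remaining term is $\|\partial_y u\|^2$. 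Because $\mathtt{H}^1(\Real^2_+)$ is by definition the $\|\cdot\|_{T_{\Real^2_+}}$-closure of $C^\infty(\overline{\Real^2_+})$, it suffices to argue for smooth $u$ and then pass to the limit.

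For the trace bound I would fix $\xi_1$ and use the one-dimensional fundamental theorem of calculus together with the decay of $\hat u(\xi_1,\cdot)$ at infinity to write
$$
|\hat u(\xi_1,0)|^2 = -\int_0^\infty \partial_y|\hat u(\xi_1,y)|^2\,\der y \le \int_0^\infty\Big(w(\xi_1)\,|\hat u|^2 + w(\xi_1)^{-1}\,|\partial_y\hat u|^2\Big)\,\der y \,,
$$
the last inequality being Young's inequality $2ab\le w\,a^2 + w^{-1}b^2$ with $a=|\hat u|$, $b=|\partial_y\hat u|$. Multiplying by $w(\xi_1)$, integrating in $\xi_1$ and using $w^2 = 1+\xi_1^2+\xi_1^4$ turns the right-hand side into precisely the expression displayed above for $\|u\|^2_{\mathtt{H}^1(\Real^2_+)}$, giving $\|u(\cdot,0)\|_{\mathtt{H}^{\frac{1}{2}}}\le C\|u\|_{\mathtt{H}^1(\Real^2_+)}$ and, by the density defining $\mathtt{H}^1(\Real^2_+)$, the continuity of the trace map.

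For the converse I would construct an explicit lifting of $g$ through the exponential profile $\hat u(\xi_1,y) := \hat g(\xi_1)\,e^{-w(\xi_1)\,y}$, which manifestly satisfies $\hat u(\xi_1,0)=\hat g(\xi_1)$. The elementary integrals $\int_0^\infty e^{-2wy}\,\der y = (2w)^{-1}$ and $\int_0^\infty w^2 e^{-2wy}\,\der y = w/2$ then give, on the Fourier side,
$$
\int_\Real |\hat g(\xi_1)|^2\Big(\frac{w(\xi_1)^2}{2w(\xi_1)} + \frac{w(\xi_1)}{2}\Big)\der\xi_1 = \int_\Real w(\xi_1)\,|\hat g(\xi_1)|^2\,\der\xi_1 = \|g\|^2_{\mathtt{H}^{\frac{1}{2}}} \,,
$$
so that $\|u\|^2_{\mathtt{H}^1(\Real^2_+)}$ equals $\|g\|^2_{\mathtt{H}^{\frac{1}{2}}}$ up to the fixed Plancherel constant; in particular $u\in H^1(\Real^2_+)$ with $\partial_x^2 u\in L^2(\Real^2_+)$, i.e. $u\in\mathtt{H}^1(\Real^2_+)$, and the desired bound $\|u\|_{\mathtt{H}^1(\Real^2_+)}\le M\|u(\cdot,0)\|_{\mathtt{H}^{\frac{1}{2}}}$ holds. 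The decay exponent $w(\xi_1)$ is chosen exactly so that the two $y$-integrals balance and reproduce the weight $w$; any other exponent would only yield an equivalent, non-sharp norm.

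The step I expect to require the most care is the rigorous treatment of the trace itself. In the direct bound I must justify the fundamental-theorem-of-calculus identity and the vanishing of $\hat u(\xi_1,y)$ as $y\to\infty$ (valid since $\hat u(\xi_1,\cdot)\in H^1(0,\infty)$ for a.e.\ $\xi_1$, which embeds into functions vanishing at infinity), which is why I first work with smooth functions and invoke the density built into the definition of $\mathtt{H}^1(\Real^2_+)$. For the lifting I must confirm that the boundary value of the constructed $u$, understood through the \emph{now-established} continuous trace map rather than as a formal Fourier identity, genuinely coincides with $g$; this follows from the continuity in $y$ of $\xi_1\mapsto\hat u(\xi_1,y)$ together with the uniqueness of the trace furnished by the first part.
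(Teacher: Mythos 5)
Your proposal is correct, and it reaches the theorem by a genuinely different route than the paper. The paper works with the \emph{full} two-dimensional Fourier transform: it extends $u$ evenly across $y=0$ to all of $\Real^2$, expresses the trace as $\hat g(\xi_1)=\frac{1}{2\pi}\int_\Real\hat u(\xi)\,\der\xi_2$, and obtains the bound from Cauchy--Schwarz in $\xi_2$ together with the explicit evaluation $\int_\Real(1+|\xi|^2+\xi_1^4)^{-1}\der\xi_2=\pi\,(1+\xi_1^2+\xi_1^4)^{-1/2}$; the lifting is then built as the rational kernel $\hat u(\xi)=2\hat g_1(\xi_1)/(\xi_2^2+1+\xi_1^2+\xi_1^4)$. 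You instead take the partial Fourier transform in $x$ only and argue on the half-line in $y$, getting the direct bound from the fundamental theorem of calculus plus Young's inequality with the weight $w(\xi_1)=(1+\xi_1^2+\xi_1^4)^{1/2}$, and lifting by the exponential profile $\hat g(\xi_1)e^{-w(\xi_1)y}$. The two constructions are in fact the same object in different guises: since the $\xi_2$-Fourier transform of $e^{-w|y|}$ is $2w/(\xi_2^2+w^2)$, the paper's kernel is exactly the even reflection of your exponential lifting. What your version buys is that it never leaves the half-space (no reflection step, no explicit $\xi_2$-integral), and it makes the role of $w(\xi_1)$ as the anisotropic trace symbol transparent; what the paper's version buys is that the whole-space Fourier characterization of $\Dom(T_{\Real^2})$ is reused verbatim elsewhere (notably in the self-adjointness proof), so the reflection machinery is already in place. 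Your closing caveats --- justifying the decay of $\hat u(\xi_1,\cdot)$ at infinity via the a.e.\ membership in $H^1(0,\infty)$, and identifying the boundary value of the lifting through the continuous trace map rather than formally --- are exactly the points that need care, and you handle them adequately.
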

\begin{proof}
For the first statement, we use similar arguments for Sobolev spaces as described in \cite{JN} to prove the theorem, we can suppose $u \in C^\infty(\overline{\Real^2_+})$ without loss of generality due to the density of smooth functions up to the boundary in $\mathtt{H^1}(\Real^2_+)$. First of all, we put $g(x) := u(x,0)$, extend $u$ to $\Real^2$
\begin{equation}
\tilde{u}(x,y) = \left\{\begin{array}{rcl}
	u(x,y) &\mbox{ if } y \geq 0  \,,\\
	 u(x,-y) &\mbox{ if } y < 0  \,,\\
\end{array}\right.
\end{equation}
then $\tilde{u} \in H^1(\Real^2)$ and $\partial_x^2 \tilde{u} \in L^2(\Real^2)$. Denote $\hat{u}$ is the Fourier transform of $\tilde{u}$, we deduce that
$$\hat{g}(\xi_1) = \frac{1}{2\pi} \int_{\Real} \hat{u}(\xi) \der \xi_2$$
and 
\begin{equation*}
\begin{aligned}
\int_\Real |\hat{g}(\xi_1)|^2 &(1 + |\xi_1|^2 + |\xi_1|^4)^{\frac{1}{2}} \der \xi_1 = \frac{1}{4 \pi^2} \int_\Real \mid\int_{\Real} \hat{u}(\xi) \der \xi_2\mid^2 (1 + |\xi_1|^2 + |\xi_1|^4)^{\frac{1}{2}} \der \xi_1 \,,\\
&\leq  \frac{1}{4 \pi^2} \int_\Real (1 + |\xi_1|^2 + |\xi_1|^4)^{\frac{1}{2}} \int_{\Real} |\hat{u}(\xi)|^2  (1 + |\xi|^2 + |\xi_1|^4)\der \xi_2   \int_\Real  (1 + |\xi|^2 + |\xi_1|^4)^{-1}       \der \xi_2 \der \xi_1 \,.\\
\end{aligned}
\end{equation*}
Moreover, 
$$  \int_\Real  (1 + |\xi|^2 + |\xi_1|^4)^{-1}       \der \xi_2  = \frac{\pi}{(1 + |\xi_1|^2 + |\xi_1|^4)^{\frac{1}{2}}}.    $$
Thus, 
$$\int_\Real |\hat{g}(\xi_1)|^2 (1 + |\xi_1|^2 + |\xi_1|^4)^{\frac{1}{2}} \der \xi_1
\leq  \frac{1}{4 \pi} \int_{\Real^2} |\hat{u}(\xi)|^2  (1 + |\xi|^2 + |\xi_1|^4)\der \xi .$$
Equivalently, we obtain 
$$\|u(x,0)\|_{\mathtt{H}^{\frac{1}{2}}} \leq C \|u\|_{\mathtt{H}^1(\Real^2_+)} .$$

Reversely, for each $g \in \mathtt{H}^\frac{1}{2}$ we construct a function 
$$ \hat{u}(\xi) := 2 \,\frac{\hat{g_1}(\xi_1)}{\xi_2^2 + (1 + |\xi_1|^2 + |\xi_1|^4)} \,, $$
where 
$$ \hat{g_1}(\xi_1) = \hat{g}(\xi_1)(1 + |\xi_1|^2 + |\xi_1|^4)^\frac{1}{2} \,.$$
We have
\begin{equation*}
\begin{aligned}
\int_{\Real^2} |\hat{u}(\xi)|^2 (1+ |\xi|^2 + |\xi_1|^4) \der \xi &= 4 \int_{\Real^2}  \frac{|\hat{g_1}(\xi_1)|^2}{\xi_2^2+ 1+ |\xi_1|^2 + |\xi_1|^4} \der \xi_1 \der \xi_2 \,,\\
&\leq 4 \int_{\Real} |\hat{g_1}(\xi_1)|^2 \frac{\pi}{(1 + |\xi_1|^2 + |\xi_1|^4)^\frac{1}{2}} \der\xi_1 \,,\\
&\leq 4\pi \int_\Real |\hat{g}(\xi_1)|^2 (1 + |\xi_1|^2 + |\xi_1|^4)^\frac{1}{2} \der\xi_1 \,,\\
&\leq 4\pi \|g\|^2_{\mathtt{H}^\frac{1}{2}} \,.
\end{aligned}
\end{equation*}
Therefore, $u(x,y) \in \mathtt{H}^1(\Real^2_+)$ for $y \geq 0$ and and $2 \sqrt{\pi}\|g(x)\|_{\mathtt{H}^{\frac{1}{2}}} \geq  2 \pi\|u\|_{\mathtt{H}^1(\Real^2_+)}$.

Moreover, we take $u_n \in C^\infty(\overline{\Real^2}; \C^2): u_n  \xrightarrow[n \to \infty]{} u$ in $\Dom(T_{\Real^2})$ then $\gamma u_n(x,0) \xrightarrow[n \to \infty]{} \gamma u(x,0)$ in $\mathtt{H}^\frac{1}{2}(\Real)$ and $\hat{u}_n(\xi) \xrightarrow[n \to \infty]{} \hat{u}(\xi)$ in
$L^2(\Real^2)$. Using the inverse Fourier transform, we estimate
$$u_n(x,y) = \frac{1}{(2\pi)^2} \int_{\Real^2} e^{i x\xi_1 + i y \xi_2} \hat{u}_n(\xi) \der \xi.$$
Thus,
\begin{equation}
\begin{aligned}
u_n(x,0) &= \frac{1}{4\pi^2} \int_{\Real^2} e^{i x\xi_1 } \hat{u}_n(\xi) \der \xi \,,\\
&\xrightarrow[n \to \infty]{}  \frac{1}{4\pi^2} \int_{\Real^2} e^{i x\xi_1 } \hat{u}(\xi) \der \xi         
= \frac{1}{2\pi^2} \int_{\Real^2} e^{i x\xi_1 } \frac{\hat{g_1}(\xi_1)}{\xi_2^2 + (1 + |\xi_1|^2 + |\xi_1|^4)} \der \xi         
\,,\\
&= \frac{1}{2\pi}\int_{\Real} e^{i x\xi_1 } \hat{g}(\xi_1) \der \xi_1 = g(x) \,.
\end{aligned}
\end{equation}
Therefore, $\gamma u(x,0) = g(x) \in \mathtt{H}^\frac{1}{2}(\Real)$.
It concludes the proof.
\end{proof}

\begin{Theorem}\label{trace}
There exists a continuous trace map $\gamma : \mathcal{M} \rightarrow \mathtt{H}_{\C^2}^{-\frac{1}{2}}$.

\end{Theorem}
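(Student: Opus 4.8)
The plan is to construct $\gamma$ by duality from the Green identity associated with $T$, feeding the bounded extension operator of Theorem~\ref{trace0} against the sesquilinear boundary form, and then extending by density using Lemma~\ref{denseM}. First I would record the integration-by-parts formula for $T$. For $u,v\in C^\infty(\overline{\Real^2_+};\C^2)$ of compact support in the $x$-direction (so that no contributions arise at $x=\pm\infty$), the diagonal first-order entries $\mp i\partial_y$ generate boundary terms on $\{y=0\}$, whereas the off-diagonal entry $-\partial_x^2+\delta$, being formally self-adjoint in $x$ over the whole line, contributes nothing; a direct computation then gives
$$
(Tu,v)_{\sii(\Real^2_+)} - (u,Tv)_{\sii(\Real^2_+)} = i\int_\Real \big(u_1\overline{v_1}-u_2\overline{v_2}\big)\big|_{y=0}\,\der x .
$$
Writing $\gamma_0 u := u(\cdot,0)$ for the classical restriction, the right-hand side is the boundary pairing of $\gamma_0 u$ and $\gamma_0 v$ twisted by $\mathrm{diag}(1,-1)$.

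Next I would extract the $\mathtt{H}^{-\frac12}$ bound. Fix $u\in C^\infty(\overline{\Real^2_+};\C^2)$ and let $g=\begin{psmallmatrix}g_1\\g_2\end{psmallmatrix}\in \mathtt{H}^{\frac12}_{\C^2}$. Applying Theorem~\ref{trace0} in each component produces an extension $v_g\in \mathtt{H}^1(\Real^2_+;\C^2)$ with $\gamma_0 v_g = g$ and $\|v_g\|_{\mathtt{H}^1(\Real^2_+;\C^2)}\le M\|g\|_{\mathtt{H}^{\frac12}_{\C^2}}$. Since every entry of $Tv_g$ is controlled by $\|v_g\|_{\mathtt{H}^1}$, one has the continuous embedding $\mathtt{H}^1(\Real^2_+;\C^2)\hookrightarrow \mathcal{M}$, so that $\|v_g\|_{\mathcal{M}}\le C\|g\|_{\mathtt{H}^{\frac12}_{\C^2}}$. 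Extending the Green identity from smooth $v$ to $v_g$ by density of $C^\infty(\overline{\Real^2_+})$ in $\mathtt{H}^1(\Real^2_+)$ and continuity of both sides, I obtain
$$
\Big|\int_\Real\big(u_1\overline{g_1}-u_2\overline{g_2}\big)\big|_{y=0}\,\der x\Big| = \big|(Tu,v_g)-(u,Tv_g)\big| \le 2\,\|u\|_{\mathcal{M}}\,\|v_g\|_{\mathcal{M}} \le C'\,\|u\|_{\mathcal{M}}\,\|g\|_{\mathtt{H}^{\frac12}_{\C^2}} .
$$
Thus $g\mapsto \int_\Real(u_1\overline{g_1}-u_2\overline{g_2})|_{y=0}\,\der x$ is a bounded (conjugate-linear) functional on $\mathtt{H}^{\frac12}_{\C^2}$, i.e. an element of $\mathtt{H}^{-\frac12}_{\C^2}$; since the twist $\mathrm{diag}(1,-1)$ is an isometry of $\mathtt{H}^{\frac12}_{\C^2}$, identifying this functional with $\gamma u := \gamma_0 u$ yields $\|\gamma u\|_{\mathtt{H}^{-\frac12}_{\C^2}} \le C'\|u\|_{\mathcal{M}}$.

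Finally I would pass to the limit. By Lemma~\ref{denseM}, $C^\infty(\overline{\Real^2_+};\C^2)$ is dense in $\mathcal{M}$, and the estimate above shows $\gamma$ is uniformly continuous on this dense subspace; hence it extends uniquely to a bounded linear map $\gamma:\mathcal{M}\to \mathtt{H}^{-\frac12}_{\C^2}$, the desired continuous trace.

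The routine parts are the integration by parts and the density extension. The crux is twofold. First, the Green identity must be justified on a dense class while the $x$-boundary terms are controlled; I handle this by working initially with functions of compact $x$-support and then approximating. Second, and more structurally, one must recognise that the anisotropic scale $\mathtt{H}^{\pm\frac12}$, whose weight $(1+|\xi_1|^2+|\xi_1|^4)^{\pm\frac12}$ is matched to the symbol of $T$, is precisely the space in which the boundary pairing is continuous, so that Theorem~\ref{trace0} supplies $v_g$ with the correct norm; an isotropic $H^{\pm\frac12}$ would fail to absorb the $\partial_x^2$ entry, and this is where the whole construction would break down.
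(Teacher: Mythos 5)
Your proposal is correct and follows essentially the same route as the paper: the Green identity for $T$, the extension operator of Theorem~\ref{trace0} to bound the boundary pairing by $\|u\|_{\mathcal{M}}\,\|g\|_{\mathtt{H}^{1/2}_{\C^2}}$, and the density of $C^\infty(\overline{\Real^2_+};\C^2)$ from Lemma~\ref{denseM}. The only difference is presentational: you prove the uniform estimate $\|\gamma_0 u\|_{\mathtt{H}^{-1/2}_{\C^2}}\le C\|u\|_{\mathcal{M}}$ on the dense subspace and extend by continuity, whereas the paper shows directly that the traces of an approximating sequence are Cauchy in $\mathtt{H}^{-1/2}_{\C^2}$ (testing component by component with $g_2=0$ and $g_1=0$).
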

\begin{proof}
For each  $u \in \mathcal{M}$, there exists a sequence $\{u_n\}_{n\geq1} \in C^\infty(\overline{\Real^2_+})$ such that $ u_n \xrightarrow[n \to \infty]{} u$ in $\mathcal{M}$ due to Lemma \ref{denseM}. We will show that the traces $\gamma u_n$ converges in $\mathtt{H}_{\C^2}^{-\frac{1}{2}}$.

Take $g = \begin{psmallmatrix} g_1 \\ g_2 \end{psmallmatrix} \in \mathtt{H}_{\C^2}^{\frac{1}{2}}$, there exists a function $v = \begin{psmallmatrix} v_1 \\ v_2 \end{psmallmatrix} \in \mathtt{H}_{\C^2}^1$ such that $ \|v\|_{\mathtt{H}_{\C^2}^1}\leq 4 \|g\|_{\mathtt{H}_{\C^2}^{\frac{1}{2}}},  \gamma v = g$ by virtue of Theorem \ref{trace0}. We estimate
\begin{equation}
\begin{aligned}
(u_n, Tv) &= \int_0^\infty \int_{\Real} \overline{u_{1n}} [-i \partial_y v_1 + (-\partial^2_x + \delta) v_2]  + \overline{u_{2n}} [(-\partial^2_x + \delta)v_1 + i \partial_y v_2 ]\ \der x \der y \,,\\
&= (Tu_n,v) + \int_\Real i\overline{u_{1n}} v_1|_{y=0} - i\overline{u_{2n}} v_2|_{y=0} \ \der x \,. \\
\end{aligned}
\end{equation}
Therefore, 
\begin{equation}
\begin{aligned}
(u_n - u_m, Tv) &=  (Tu_n-Tu_m,v) + \int_\Real i\overline{(u_{1n}- u_{1m})} v_1|_{y=0} - i\overline{(u_{2n}- u_{2m})} v_2|_{y=0} \ \der x  \\
\end{aligned}
\end{equation}
for all $m, n \geq 1$.

Now we choose $g_2=v_2=0$. It deduces that 
\begin{equation*}
\begin{aligned}
|(u_{1n}(x,0) - u_{1m}(x,0), v)| &\leq \|u_m-u_n\| \|Tv\| + \|Tu_n -Tu_m\| \|v\| \,,\\
&\quad\leq C (\|u_n-u_m\| + \|Tu_n-Tu_m\|) \|v\|_{\mathtt{H}_{\C^2}^1} \,,\\
&\quad\leq M (\|u_n-u_m\| + \|Tu_n-Tu_m\|)\|v\|_{\mathtt{H}_{\C^2}^\frac{1}{2}} \,,
\end{aligned}
\end{equation*}
with $M, C$ are constants being independent of $v$.
Hence,
$\lim_{n\rightarrow\infty} (u_{1n}- u_{1m})|_{y=0} =0$ exists in 
$\mathtt{H}^{-\frac{1}{2}}$. Analogously, if we choose $g_1=v_1=0$, we also obtain $\lim_{n\rightarrow\infty} (u_{2n}- u_{2m})|_{y=0}=0$ exists in 
$\mathtt{H}^{-\frac{1}{2}}$. As a result, there exists a  trace map 
\begin{equation*}
\begin{aligned}
 \gamma : \quad \quad  &\mathcal{M} \longrightarrow \mathtt{H}_{\C^2}^{-\frac{1}{2}} \\
          &u \longmapsto  \gamma u = \lim_{n\rightarrow\infty} u_n(x,0) \,.
\end{aligned}
\end{equation*}
The limit is independent of the chosen convergent sequences. It means that if $\{u_n\}_{n\geq1}, \{u'_n\}_{n\geq 1}$ simultaneously converge to $u$ in $\mathcal{M}$ then $\gamma u$ is still uniquely identified.

\end{proof}

\begin{Theorem}\label{half-space self-adjoint}
$T_{\Real^2_+}$ is a self-adjoint operator.
\end{Theorem}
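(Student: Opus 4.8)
The plan is to prove $T_{\Real^2_+} = T_{\Real^2_+}^*$ in two movements: first show that $T_{\Real^2_+}$ is symmetric and identify its adjoint with the maximal operator $T$ cut down by a single boundary condition, and then upgrade the mere membership $Tu \in \sii$ to full $H^1$-regularity by a reflection that reduces the problem to the already self-adjoint whole-space operator $T_{\Real^2}$. The starting point is the Green formula already implicit in the proof of Theorem~\ref{trace}: for smooth $u,v$ up to the boundary, and by density for $u,v \in \mathcal{M}$ with the traces read in the duality $\mathtt{H}^{-\frac{1}{2}}$--$\mathtt{H}^{\frac{1}{2}}$,
\begin{equation*}
(u, Tv)_{\sii(\Real^2_+)} - (Tu, v)_{\sii(\Real^2_+)} = i \int_\Real \big( \overline{\gamma_1 u}\,\gamma_1 v - \overline{\gamma_2 u}\,\gamma_2 v \big) \der x \,.
\end{equation*}
For $u,v \in \Dom(T_{\Real^2_+})$ the boundary condition gives $\gamma_1 u = \gamma_2 u$ and $\gamma_1 v = \gamma_2 v$, so the right-hand side vanishes and $T_{\Real^2_+}$ is symmetric; in particular $T_{\Real^2_+} \subset T_{\Real^2_+}^*$.

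Next I would characterize $\Dom(T_{\Real^2_+}^*)$. If $u \in \Dom(T_{\Real^2_+}^*)$, testing $(u, T_{\Real^2_+}v) = (T_{\Real^2_+}^* u, v)$ against $v \in C^\infty_0(\Real^2_+;\C^2)$ shows $Tu \in \sii$, that is $u \in \mathcal{M}$ and $T_{\Real^2_+}^* u = Tu$. The Green formula then forces
\begin{equation*}
i \int_\Real \big( \overline{\gamma_1 u} - \overline{\gamma_2 u}\big) g \der x = 0 \qquad \text{for every } g \in \mathtt{H}^{\frac{1}{2}} \,,
\end{equation*}
because, by Theorem~\ref{trace0}, the common trace $g = \gamma_1 v = \gamma_2 v$ of admissible $v \in \Dom(T_{\Real^2_+})$ exhausts $\mathtt{H}^{\frac{1}{2}}$. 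Density of $\mathtt{H}^{\frac{1}{2}}$ in the pairing yields $\gamma_1 u = \gamma_2 u$ in $\mathtt{H}^{-\frac{1}{2}}$, so $\Dom(T_{\Real^2_+}^*) = \{ u \in \mathcal{M} : \gamma_1 u = \gamma_2 u \}$. It remains to show this set lies in $\Dom(T_{\Real^2_+})$, i.e. that such $u$ is in $H^1$ with $\partial_x^2 u \in \sii$.

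The regularity step is where I expect the real work to lie. Given $u = \begin{psmallmatrix} u_1 \\ u_2 \end{psmallmatrix} \in \mathcal{M}$ with $\gamma_1 u = \gamma_2 u$, I would use the component-swapping reflection
\begin{equation*}
\tilde u_1(x,y) := \begin{cases} u_1(x,y), & y > 0, \\ u_2(x,-y), & y < 0, \end{cases}
\qquad
\tilde u_2(x,y) := \begin{cases} u_2(x,y), & y > 0, \\ u_1(x,-y), & y < 0. \end{cases}
\end{equation*}
A direct computation gives, on $y<0$, $(T\tilde u)(x,y) = \big( (Tu)_2(x,-y),\, (Tu)_1(x,-y)\big)$, so $T\tilde u$ is piecewise in $\sii(\Real^2)$; the swap is tailored precisely so that the sign change of $\partial_y$ under reflection is absorbed by exchanging the two components together with the diagonal entries $\mp i\partial_y$. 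The matching $\gamma_1 u = \gamma_2 u$ ensures that neither $\tilde u_1$ nor $\tilde u_2$ jumps across $\{y=0\}$, so no distributional surface term arises and $T\tilde u \in \sii(\Real^2)$ globally. Since $T\tilde u \in \sii(\Real^2)$ forces $\tilde u \in H^1(\Real^2;\C^2)$ with $\partial_x^2 \tilde u \in \sii(\Real^2)$ (by the Fourier argument already used in Lemma~\ref{denseM}, the symbol of $T$ growing like $|\xi_2| + \xi_1^2$), restriction to $\Real^2_+$ yields $u \in H^1(\Real^2_+;\C^2)$, $\partial_x^2 u \in \sii$ and $\gamma_1 u = \gamma_2 u$, i.e. $u \in \Dom(T_{\Real^2_+})$.

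Combining the two inclusions gives $\Dom(T_{\Real^2_+}^*) = \Dom(T_{\Real^2_+})$ with $T_{\Real^2_+}^* = T_{\Real^2_+}$, which is the claim. The main obstacle is the regularity step: one must justify rigorously that the reflected extension carries no singular contribution on $\{y=0\}$ when the traces are known only in $\mathtt{H}^{-\frac{1}{2}}$, and that matching the traces in this weak sense suffices to place $\tilde u$ in the maximal domain of $T_{\Real^2}$. I would handle this cleanly by verifying $\langle T\tilde u, \phi\rangle = \langle \tilde u, T\phi\rangle$ for all $\phi \in C^\infty_0(\Real^2;\C^2)$ and checking that the two half-space boundary contributions cancel exactly because of the swap and the condition $\gamma_1 u = \gamma_2 u$.
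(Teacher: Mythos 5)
Your proposal is correct and follows essentially the same route as the paper: symmetry via the boundary-term cancellation, identification of $\Dom(T_{\Real^2_+}^*)$ inside $\mathcal{M}$ with the matching condition $\gamma_1 u = \gamma_2 u$ in $\mathtt{H}^{-\frac{1}{2}}$ obtained from the surjectivity of the trace map (Theorem~\ref{trace0}), and then the component-swapping reflection $\tilde u(x,y) = u(x,-y)^*$ to place $\tilde u$ in $\Dom(T_{\Real^2}^*) = \Dom(T_{\Real^2})$ and recover the $H^1$ and $\partial_x^2$ regularity by restriction. The paper's proof executes exactly the distributional verification against test functions on $\Real^2$ that you flag as the main obstacle, so there is nothing substantive to add.
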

\begin{proof}
For all $u,v \in \Dom(T_{\Real^2_+})$, we have
\begin{equation}\label{symmetric}
\begin{aligned}
(u, T_{\Real^2_+}v) &= \int_0^\infty \int_{\Real} \overline{u_1} [-i \partial_y v_1 + (-\partial^2_x + \delta) v_2]  + \overline{u_2} [(-\partial^2_x + \delta)v_1 + i \partial_y v_2 ]\ \der x \der y \,,\\
&= \int_0^\infty \int_{\Real} i \partial_y\overline{u_1} v_1 + (-\partial^2_x + \delta) \overline{u_1} v_2 \ \der x \der y + \int_\Real i\overline{u_1} v_1|_{y=0} \ \der x  \\
& \quad + \int_0^\infty \int_{\Real} -i \partial_y\overline{u_2} v_1 + (-\partial^2_x + \delta) \overline{u_2} v_1\ \der x \der y + \int_\Real -i\overline{u_2} v_2|_{y=0} \ \der x \,,\\
&= (T_{\Real^2_+}u,v) \,.
\end{aligned}
\end{equation}
It implies that $T_{\Real^2_+}$ is symmetric due to the fact that $u_1 = u_2, v_1 = v_2$ on $y=0$. 

Now we take $u \in \Dom(T^*_{\Real^2_+})$, recall the definition for the domain of the adjoint operator
\begin{equation}\label{operator.1D}
\begin{aligned}
\Dom(T^*_{\Real^2_+}) &:= \left\{
  u \in L^{2}(\Real^2_+) :  \mbox{ there exists } \phi_u \in L^2(\Real^2_+), \forall v \in \Dom(T_{\Real^2_+}), (u, T_{\Real^2_+}v) = (\phi_u, v)
  \right\}
  \,.
\end{aligned}
\end{equation}
For each $w = \begin{psmallmatrix}
w_1 \\ w_2
\end{psmallmatrix}$, we denote $w^{*} := \begin{psmallmatrix}
w_2 \\ w_1
\end{psmallmatrix}$.
Taking $v \in C^\infty_0(\Real^2_+)$ we have
\begin{equation}\label{symmetric}
\begin{aligned}
(u, T_{\Real^2_+}v) &= \int_0^\infty \int_{\Real} \overline{u_1} [-i \partial_y v_1 + (-\partial^2_x + \delta) v_2]  + \overline{u_2} [(-\partial^2_x + \delta)v_1 + i \partial_y v_2 ]\ \der x \der y \,,\\
&= \left\langle Tu,v \right\rangle = (\phi_u,v) \,,
\end{aligned}
\end{equation}
here $\langle\cdot,\cdot\rangle$ is the duality bracket of distributions. It deduces that $Tu \in L^2(\Real^2_+)$ and thus, $Tu = \phi_u$.

For all $ v \in \Dom(T_{\Real^2_+})$, we have
\begin{equation}
\begin{aligned}
(u, T_{\Real^2_+}v) &= \int_0^\infty \int_{\Real} \overline{u_1} [-i \partial_y v_1 + (-\partial^2_x + \delta) v_2]  + \overline{u_2} [(-\partial^2_x + \delta)v_1 + i \partial_y v_2 ]\ \der x \der y \,,\\
&= \int_0^\infty \int_{\Real} i \partial_y\overline{u_1} v_1 + (-\partial^2_x + \delta) \overline{u_1} v_2 \ \der x \der y + \int_\Real i\overline{u_1} v_1|_{y=0} \ \der x  \\
& \quad + \int_0^\infty \int_{\Real} -i \partial_y\overline{u_2} v_1 + (-\partial^2_x + \delta) \overline{u_2} v_1\ \der x \der y + \int_\Real -i\overline{u_2} v_2|_{y=0} \ \der x \,,\\
&= (Tu,v) + \int_\Real i\overline{(u_1-u_2)} v_1|_{y=0} \ \der x   \,.
\end{aligned}
\end{equation}
It yields that 
$ u_1 = u_2 $ in $\mathtt{H}^{-\frac{1}{2}}$.

Now we  extend the functions $u$ to $\Real^2$ 
\begin{equation}
\tilde{u}(x,y) = \left\{\begin{array}{rcl}
	u(x,y) &\mbox{ if } y \geq 0  \,,\\
	 u(x,-y)^* &\mbox{ if } y < 0  \,,\\
\end{array}\right.
\end{equation}

For all $v \in C^\infty_0(\Real^2; \C^2)$, we put $w(x,y)= v(x,-y)$ for all $(x,y) \in \Real^2$ then we deduce that $w \in C^\infty_0(\Real^2)$. Taking $u_1=u_2$ in  $\mathtt{H}^{-\frac{1}{2}}$ into account, we have
\begin{equation*}
\begin{aligned}
|(\tilde{u}, Tv)| &= |\int_0^\infty \int_{\Real} \overline{u_1} [-i \partial_y v_1 + (-\partial^2_x + \delta) v_2]  + \overline{u_2} [(-\partial^2_x + \delta)v_1 + i \partial_y v_2 ]\ \der x \der y \\
&\quad+ \int_{-\infty}^0 \int_{\Real} \overline{\tilde{u}_1} [-i \partial_y v_1 + (-\partial^2_x + \delta) v_2]  + \overline{\tilde{u}_2} [(-\partial^2_x + \delta)v_1 + i \partial_y v_2 ]\ \der x \der y |\,,\\
&= |(Tu,v)_{L^2(\Real^2_+)} + \int_\Real i(\overline{u_1}v_1-\overline{u}_2 v_2)|_{y=0} \ \der x  + \int_{-\infty}^0 \int_{\Real} i \partial_y\overline{\tilde{u}_1} v_1 + (-\partial^2_x + \delta) \overline{\tilde{u}_1} v_2 \ \der x \der y \\
&\quad+ \int_\Real -i\overline{\tilde{u}_1} v_1|_{y=0} \ \der x   + \int_{-\infty}^0 \int_{\Real} -i \partial_y\overline{\tilde{u}_2} v_1 + (-\partial^2_x + \delta) \overline{\tilde{u}_2} v_1\ \der x \der y + \int_\Real i\overline{\tilde{u}_2} v_2|_{y=0} \ \der x |\,,\\
&= | (Tu,v)_{L^2(\Real^2_+)} + \int_\Real i(\overline{u_1}v_1-\overline{u}_2 v_2)|_{y=0} \ \der x + ((Tu)^*,w)_{L^2(\Real^2_+)} + \int_\Real -i(\overline{u_2}v_1-\overline{u}_1 v_2)|_{y=0} \ \der x |\,,\\
&= |(Tu,v)_{L^2(\Real^2_+)} + (Tu,w^*)_{L^2(\Real^2_+)}| \leq \|Tu\|_{L^2(\Real^2_+)} \|v\|_{L^2(\Real^2)} \,.
\end{aligned}
\end{equation*}
It implies that $\tilde{u} \in \Dom(T^*_{\Real^2})$. Moreover, $\Dom(T^*_{\Real^2})= \Dom(T_{\Real^2})$ due to the self-adjointess of $T_{\Real^2}$. Hence, we deduce that $u \in H^1(\Real^2_+)$ and $\partial_x^2 u \in L^2(\Real^2_+)$. Besides, we also have $u^*(x,0) = u(x,0)$ due to the trace theorem and thus, it deduces that $ u_1 = u_2$ on $y=0$. This concludes the proof of the theorem.

\end{proof}
%

\section{The spectral properties of semi-Dirac semi-metals}\label{Sec.proof}
%
By virtue of the self-adjointness,
 the closest-to-zero squared spectrum of the operator $T_{\Real^2_+}$ 
can be computed by

\begin{equation*}
  \lambda_1(\Real^2_+)^2 
  = \inf_{\stackrel[u \not= 0]{}{u \in \Dom(T_{\Real^2_+})}} 
  \frac{\|T_{\Real^2_+}u\|^2}{\, \|u\|^2}  
\end{equation*}
where $$\|T_{\Real^2_+} u\|^2   = \|\partial_y u\|^2 + \|\partial_x^2 u\|^2 + 2 \delta\|\partial_x u\|^2 
  + \delta^2 \|u\|^2.$$
Hence, 
\begin{equation*}
  \lambda_1(\Real^2_+)^2 - \delta^2
  = \inf_{\stackrel[u \not= 0]{}{u \in \Dom(T_{\Real^2_+})}} 
  \frac{\|\partial_y u\|^2 + \|\partial_x^2 u\|^2 + 2 \delta\|\partial_x u\|^2 }{\, \|u\|^2}  \geq 0
\end{equation*}
Thus, the spectrum of $T_{\Real^2_+}$: $\sigma_{T_{\Real^2_+}} \subset (-\infty, -\delta] \cup [\delta, \infty)$. Moreover, this spectrum is purely essential, we have the following lemma
\begin{Lemma}\label{pointsp}
The point spectrum of $T_{\Real^2_+}$ is empty.
\end{Lemma}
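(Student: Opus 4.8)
The plan is to reduce the two-dimensional spectral problem to a family of one-dimensional ones by exploiting the translation invariance in $x$, and then to rule out eigenvalues fibre by fibre. Let $\mathcal{F}_x$ denote the partial Fourier transform in the variable $x$ and write $\xi_1$ for the dual variable. Because the coefficients of $T_{\Real^2_+}$ do not depend on $x$, this transform produces the direct-integral decomposition
$$\mathcal{F}_x\, T_{\Real^2_+}\, \mathcal{F}_x^{-1} \;=\; \int_\Real^\oplus \hat{T}(\xi_1)\, \der \xi_1, \qquad \hat{T}(\xi_1) := \begin{psmallmatrix} -i\partial_y & m(\xi_1) \\ m(\xi_1) & i\partial_y \end{psmallmatrix},$$
with $m(\xi_1) := \xi_1^2 + \delta \geq \delta$, each fibre acting in $\sii((0,\infty);\C^2)$ on $\{ v \in H^1((0,\infty);\C^2) : v_1(0) = v_2(0) \}$; the boundary condition $u_1 = u_2$ on $y=0$ survives because the trace commutes with $\mathcal{F}_x$. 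Since the point spectrum of a direct integral is supported on the set of fibre parameters for which $\lambda$ is a genuine fibre eigenvalue, it is enough to show that $\sigma_p(\hat{T}(\xi_1)) = \emptyset$ for every $\xi_1$. Concretely, if $T_{\Real^2_+} u = \lambda u$ then $\hat{T}(\xi_1)\,\widehat{u}(\xi_1,\cdot) = \lambda\,\widehat{u}(\xi_1,\cdot)$ for a.e.\ $\xi_1$, so once no fibre admits an eigenfunction we get $\widehat{u}(\xi_1,\cdot) = 0$ for a.e.\ $\xi_1$ and hence $u = 0$.

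Next I would analyse the fibre eigenvalue equation $\hat{T}(\xi_1) v = \lambda v$, which is a first-order linear ODE system on $(0,\infty)$. Writing $m := m(\xi_1) > 0$ and recalling that $\lambda \in \Real$ by self-adjointness, the system takes the form $v' = A v$ with
$$A = \begin{psmallmatrix} i\lambda & -im \\ im & -i\lambda \end{psmallmatrix}, \qquad \det A = \lambda^2 - m^2, \qquad \operatorname{tr} A = 0,$$
so the eigenvalues of $A$ are $\pm\sqrt{m^2 - \lambda^2}$. When $\lambda^2 \geq m^2$ these eigenvalues are purely imaginary (or $A$ is a nonzero nilpotent when $\lambda^2 = m^2$), and every nonzero solution is oscillatory or grows linearly, hence lies outside $\sii((0,\infty))$. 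The only regime producing a nontrivial square-integrable solution is $\lambda^2 < m^2$: setting $k := \sqrt{m^2 - \lambda^2} > 0$, the decaying branch is one-dimensional and spanned by
$$v(y) = e^{-k y} \begin{psmallmatrix} m \\ \lambda - i k \end{psmallmatrix}.$$

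It remains to test the boundary condition. Evaluating the unique (up to scalar) admissible solution at $y = 0$, the requirement $v_1(0) = v_2(0)$ reads $m = \lambda - i k$; taking imaginary parts forces $k = 0$, contradicting $k > 0$. Thus no fibre $\hat{T}(\xi_1)$ has an eigenvalue, and by the reduction above $\sigma_p(T_{\Real^2_+}) = \emptyset$. (If $\delta = 0$ the single degenerate fibre $\xi_1 = 0$ has $m = 0$ and decouples into $\mp i\partial_y$, whose solutions are again non-$\sii$; in any case a single fibre is a null set and does not affect the conclusion.) I expect the only real subtlety to be the functional-analytic justification of the fibre reduction --- namely, that an eigenfunction of the full operator yields, for a.e.\ $\xi_1$, a fibre function of the correct regularity satisfying both the ODE and the trace boundary condition, so that the one-dimensional analysis genuinely applies; the ODE computation and the boundary-condition obstruction themselves are elementary.
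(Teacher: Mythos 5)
Your proof is correct, but it takes a genuinely different route from the paper. The paper argues by extension: given a putative eigenpair $(\lambda,u)$, it reflects $u$ across $y=0$ with the component swap $u^*=\begin{psmallmatrix}u_2\\u_1\end{psmallmatrix}$, checks that the boundary condition $u_1=u_2$ on $y=0$ makes the reflected function $\tilde u$ lie in $H^1(\Real^2)$ with $\partial_x^2\tilde u\in L^2(\Real^2)$ and satisfy $T_{\Real^2}\tilde u=\lambda\tilde u$, and then invokes the known fact (from the whole-plane analysis in the cited reference) that $T_{\Real^2}$ has empty point spectrum. That argument is shorter and reuses the boundary condition in exactly the way it was designed to be used, but it leans on the whole-space result as a black box. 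Your fibre decomposition is self-contained and more informative: the partial Fourier transform in $x$ reduces the problem to half-line Dirac-type operators with mass $m(\xi_1)=\xi_1^2+\delta$, and your ODE analysis correctly identifies the unique decaying branch $e^{-ky}\begin{psmallmatrix}m\\ \lambda-ik\end{psmallmatrix}$ for $\lambda^2<m^2$ and shows that the boundary condition $v_1(0)=v_2(0)$ forces $k=0$, while the regimes $\lambda^2\geq m^2$ produce no $L^2$ solutions; this is essentially the computation that would also recover the full spectrum $(-\infty,-\delta]\cup[\delta,\infty)$. The price is the functional-analytic bookkeeping you flag yourself: one must justify that an eigenfunction of the full operator yields, for a.e.\ $\xi_1$, a fibre function in $H^1((0,\infty);\C^2)$ satisfying both the fibre equation and the trace condition $v_1(0)=v_2(0)$ (Fubini for the $H^1$ regularity, plus compatibility of the trace with $\mathcal{F}_x$), and that the point spectrum of a direct integral is empty when no fibre has an eigenvalue. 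These are standard but should be written out; with them supplied, your argument is complete and, if anything, proves more than the paper's.
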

\begin{proof}
Suppose that the point spectrum of $T_{\Real^2_+}$ is not empty then there exists $ \lambda \in \Real, u = \begin{psmallmatrix}
u_1 \\ u_2 
\end{psmallmatrix} \in \Dom(T_{\Real^2_+})$ being the eigenvalue and eigenfunction of $T_{\Real^2_+}$ respectively, we have
\begin{equation}\label{system1} 
\left\{
\begin{aligned}
  -i\partial_y  u_1 + (-\partial_x^2 + \delta) u_2 &= \lambda u_1 \quad && \mbox{in} \quad \Real^2_+ \,,\\
(-\partial_x^2 + \delta) u_1 + i\partial_y  u_2   &= \lambda  u_2  \quad && \mbox{in} \quad \Real^2_+ \,,\\
 u_1 &= u_2 \quad && \mbox{on } y=0 .
\end{aligned}  
\right.
\end{equation}
Extending $u$ to $\Real^2$, we put
\begin{equation}
\tilde{u}(x,y) = \left\{\begin{array}{rcl}
	u(x,y) &\mbox{ if } y \geq 0  \,,\\
	 u^*(x,-y) &\mbox{ if } y < 0  \,.\\
\end{array}\right.
\end{equation}
It yields
$$
\partial_x\tilde{u}(x,y) = \left\{\begin{array}{rcl}
	\partial_x u(x,y) &\mbox{ if } y \geq 0  \,,\\
	 \partial_x u^*(x,-y) &\mbox{ if } y < 0  \,,\\
\end{array}\right.
\quad \quad \quad 
\partial_x^2\tilde{u}(x,y) = \left\{\begin{array}{rcl}
	\partial_x^2 u(x,y) &\mbox{ if } y \geq 0  \,,\\
	 \partial_x^2 u^*(x,-y) &\mbox{ if } y < 0  \,,\\
\end{array}\right.
$$
and $$\partial_y\tilde{u}(x,y) = \left\{\begin{array}{rcl}
	\partial_y u(x,y) &\mbox{ if } y \geq 0  \,,\\
	 -\partial_y u^*(x,-y) &\mbox{ if } y < 0  \,.\\
\end{array}\right.
$$
It deduces that $\tilde{u} \in H^1(\Real^2), \partial^2_x \tilde{u} \in L^2(\Real^2)$ and $\tilde{u}$ is an eigenfunction  corresponding to an eigenvalue $\lambda$ of $T_{\Real^2}$. As a result, the point spectrum of $T_{\Real^2}$ is not empty. It is a contrary to the fact that the spectrum of $T_{\Real^2}$ is purely absolutely continuous and the point spectrum of $T_{\Real^2}$ is empty by means of the Fourier transform. Consequently, the point spectrum of $T_{\Real^2_+}$ is empty. The proof is completed.

\end{proof}
Regarding Lemma \ref{pointsp} and the self-adjointness of $T_{\Real^2_+}$, it is apparent that the spectrum of $T_{\Real^2_+}$ is purely essential : $\sigma(T_{\Real^2_+}) = \sigma_{\textup{ess}}(T_{\Real^2_+})$. Now we compute the spectrum of $T_{\Real^2_+}$ as follows,
\begin{Theorem}
$$\sigma_{T_{\Real^2_+}} =(-\infty, -\delta] \cup [\delta, \infty).$$
\end{Theorem}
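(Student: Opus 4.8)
The plan is to prove the two inclusions separately. The inclusion $\sigma(T_{\Real^2_+}) \subset (-\infty, -\delta] \cup [\delta, \infty)$ has already been obtained above, from the variational lower bound $\lambda_1(\Real^2_+)^2 \geq \delta^2$ together with self-adjointness, so it remains only to establish the reverse inclusion $(-\infty, -\delta] \cup [\delta, \infty) \subset \sigma(T_{\Real^2_+})$. Since the spectrum is purely essential (by Lemma~\ref{pointsp} and self-adjointness), it suffices to exhibit, for each $\lambda$ with $|\lambda| \geq \delta$, a singular Weyl sequence: functions $u_n \in \Dom(T_{\Real^2_+})$ with $\|u_n\| = 1$, $u_n \rightharpoonup 0$, and $\|(T_{\Real^2_+} - \lambda) u_n\| \to 0$.

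First I would pass to the Fourier symbol. The differential operator $T$ has the constant-coefficient symbol
\[
  M(\xi_1,\xi_2) = \begin{pmatrix} \xi_2 & \xi_1^2 + \delta \\ \xi_1^2 + \delta & -\xi_2 \end{pmatrix},
\]
whose eigenvalues are $\pm\sqrt{\xi_2^2 + (\xi_1^2+\delta)^2}$. Given $\lambda$ with $|\lambda| \geq \delta$, I choose $(\xi_1^0,\xi_2^0)$ solving $\lambda = \sgn(\lambda)\sqrt{(\xi_2^0)^2 + ((\xi_1^0)^2 + \delta)^2}$ (for instance $\xi_1^0 = 0$ and $\xi_2^0 = \sgn(\lambda)\sqrt{\lambda^2 - \delta^2}$), and let $v \in \C^2\setminus\{0\}$ be an eigenvector of $M(\xi_1^0,\xi_2^0)$ associated with the eigenvalue $\lambda$.

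The heart of the construction is to turn the plane wave $e^{i(\xi_1^0 x + \xi_2^0 y)} v$, a generalized eigenfunction of $T$ that is neither in $L^2$ nor supported in the half-space, into a genuine Weyl sequence localized away from the boundary. I would fix a bump $\phi \in C_0^\infty(\Real^2)$ with $\supp \phi \subset B(0,1)$ and $\|\phi\|_{L^2(\Real^2)} = 1$, and set
\[
  u_n(x,y) = \frac{1}{\|v\|\, n}\, e^{i(\xi_1^0 x + \xi_2^0 y)}\, \phi\!\left( \frac{x}{n}, \frac{y - 2n}{n} \right) v.
\]
The shift by $2n$ in the $y$-variable places $\supp u_n \subset \Real \times [n, 3n] \subset \Real^2_+$, so each $u_n$ is smooth and compactly supported in the open half-space; in particular $u_n \in C_0^\infty(\Real^2_+;\C^2) \subset \Dom(T_{\Real^2_+})$ and the boundary condition $u_1 = u_2$ on $y=0$ holds vacuously. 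On such functions $T_{\Real^2_+}$ acts exactly as $T$. A direct computation gives $\|u_n\| = 1$, while the supports escape to $y = +\infty$, so $u_n \rightharpoonup 0$. Expanding $(T - \lambda)u_n$ by the product rule, the leading term $e^{i(\xi_1^0 x + \xi_2^0 y)}(M(\xi^0) - \lambda)v\,\phi(\cdot/n)$ vanishes because $v$ is an eigenvector, and every remaining term carries at least one derivative of $\phi$ and hence a factor $n^{-1}$ (the second-order part $-\partial_x^2$ additionally produces an $n^{-2}$ term); after the $L^2$ normalization these contribute $O(n^{-1})$, whence $\|(T_{\Real^2_+} - \lambda) u_n\| \to 0$.

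Consequently each $\lambda$ with $|\lambda| \geq \delta$ lies in $\sigma_{\textup{ess}}(T_{\Real^2_+}) = \sigma(T_{\Real^2_+})$, which combined with the opposite inclusion already proved yields the claimed equality. The only step needing genuine care is the error estimate for $(T - \lambda)u_n$: because $-\partial_x^2$ is second order, one must track both the $n^{-1}$ and $n^{-2}$ corrections and verify that the normalizing factor $n^{-1}$ appropriate to two-dimensional spreading sends all of them to zero. The conceptual point is that translating the wave packet far into the bulk, where the boundary is never seen, lets the whole-space generalized eigenfunctions serve directly as half-space Weyl sequences; this is the concrete form of the general fact $\sigma_{\textup{ess}}(T_{\Real^2}) \subset \sigma_{\textup{ess}}(T_{\Real^2_+})$.
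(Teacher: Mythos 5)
Your proposal is correct and follows essentially the same route as the paper: both build a Weyl sequence from a plane-wave generalized eigenfunction of the constant-coefficient symbol, modulated by an $L^2$-normalized, dilated bump compactly supported in the open half-space, and both rely on the already-established inclusion $\sigma(T_{\Real^2_+})\subset(-\infty,-\delta]\cup[\delta,\infty)$ for the other direction. The only (immaterial) differences are that the paper oscillates in the $x$-direction with $k^2=\mu-\delta$ and the explicit spinors $\begin{psmallmatrix}1\\ \pm 1\end{psmallmatrix}$, whereas you oscillate in $y$ with a general eigenvector of $M(\xi^0)$ and translate the bump into the bulk rather than letting the dilation carry it away from the boundary.
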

\begin{proof}
Recall that $\sigma_{T_{\Real^2_+}} \subset (-\infty, -\delta] \cup [\delta, \infty)$ due to the min-max principle. Therefore, we only have to verify the opposite.

Take $k \in \Real, k\geq 0$ and choose a function $\varphi(x,y) \in C^\infty_0(\Real^2_+)$ such that $\|\varphi\| = \frac{1}{\sqrt{2}}$. Putting
$$ \varphi_n(x,y): = \frac{1}{n} \varphi(\frac{x}{n}, \frac{y}{n})$$
then we also have $\|\varphi_n\| =\frac{1}{\sqrt{2}}$. Moreover,
$$ \partial_x \varphi_n(x,y) = \frac{1}{n^2} \partial_x\varphi(\frac{x}{n}, \frac{y}{n}), \quad \quad \quad \partial_y \varphi_n(x,y) = \frac{1}{n^2} \partial_y\varphi(\frac{x}{n}, \frac{y}{n}) \,,$$
$$\partial_x^2 \varphi_n(x,y) = \frac{1}{n^3} \partial_x^2\varphi(\frac{x}{n}, \frac{y}{n}) \,.$$
Therefore,
$$ \|\partial_x \varphi_n(x,y)\| = \frac{1}{n} \|\partial_x\varphi\|, \quad \|\partial_y \varphi_n(x,y)\| = \frac{1}{n} \|\partial_y\varphi\|, \quad \|\partial_x^2 \varphi_n(x,y)\| = \frac{1}{n^2} \|\partial_x^2\varphi\| \,.
$$
Putting 
$$ u_n(x,y) := \varphi_n(x,y) e^{ikx}, \quad \quad \psi_n := \begin{psmallmatrix} u_n \\ u_n \end{psmallmatrix}$$
then $\psi_n  \in \Dom(T_{\Real^2_+}) , \|\psi_n\|=1$ and 
$$ \partial_x u_n = \partial_x \varphi_n e^{ikx} + ik \varphi_n e^{ikx}, \quad \quad 
\partial_x^2 u_n = \partial_x^2 \varphi_n e^{ikx} + 2ik \, \partial_x\varphi_n e^{ikx} - k^2 u_n\,, $$
$$\partial_y u_n = \partial_y \varphi_n e^{ikx} \,.$$

For every $\mu \in [\delta, \infty)$, we choose $k^2 = \mu - \delta$ and
 estimate
\begin{equation}
\begin{aligned}
\|T_{\Real^2_+} \psi_n - \mu \psi_n\|^2 &= \| -i \partial_y u_n -\mu u_n + (-\partial_x^2 + \delta) u_n \|^2 + \| i \partial_y u_n -\mu u_n + (-\partial_x^2 + \delta) u_n \|^2 \,,\\
&= \| -i \partial_y u_n  -\partial_x^2 \varphi_n e^{ikx} - 2ik \, \partial_x\varphi_n e^{ikx} \|^2  + \| i \partial_y u_n  -\partial_x^2 \varphi_n e^{ikx} - 2ik \, \partial_x\varphi_n e^{ikx} \|^2 \,,\\
&\leq 2 (\|\partial_y \varphi_n\|^2 + \|\partial_x^2\varphi_n\|^2 + \|2k \, \partial_x\varphi_n\|^2) \xrightarrow[n \to \infty]{} 0 \,.
\end{aligned}
\end{equation}
It follows that $\mu \in \sigma(T_{\Real^2_+})$ due to the Weyl theorem.

On the other hand, for every $\mu \in (-\infty, -\delta]$, we choose $k^2 = -\delta-\mu$ and $\phi_n = \begin{psmallmatrix} u_n \\ -u_n \end{psmallmatrix} \in \Dom(T_{\Real^2_+})$. Employing the analogous estimates, we also obtain that 
$$\|T_{\Real^2_+} \phi_n - \mu \phi_n\| \xrightarrow[n \to \infty]{} 0 \,.$$
Equivalently, $\mu \in \sigma(T_{\Real^2_+})$. Hence, it concludes the proof.

\end{proof}
Summarily, we have investigated that $\sigma(T_{\Real^2_+}) = \sigma_{\textup{ess}}(T_{\Real^2_+}) = (-\infty, -\delta] \cup [\delta, \infty)$. In order to establish the point spectrum of the semi-Dirac semi-metals, we need to add some potentials.

Let us take  $V \in L^\infty(\Real^2_+; \Real)$, we consider the following operator
\begin{equation}
\begin{aligned}
H := T_{\Real^2_+} + V \begin{pmatrix}
0 & 1 \\
1 & 0
\end{pmatrix} \, 
 \mbox{ with } \Dom(H):= \Dom(T_{\Real^2_+}) \,.
\end{aligned}
\end{equation}
By the boundedness of $V$ and the self-adjointness of $T_{\Real^2_+}$ then it is apparent that $H$ is also self-adjoint. Now we illustrate the effect of the sign of $V$ on the properties of the spectrum of $H$.
\begin{Proposition}
If $L^\infty(\Real^2_+; \Real) \ni V= V(x) \geq 0$ a.e. in $\Real^2_+$ then $\sigma(H) \subset (-\infty, -\delta] \cup [\delta, \infty)$.
\end{Proposition}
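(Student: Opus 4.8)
The plan is to reduce the claim to the single lower bound $\|Hu\|^2 \ge \delta^2\|u\|^2$ and to mimic the computation of Theorem~\ref{formula}. Since $V\in L^\infty(\Real^2_+;\Real)$ and $T_{\Real^2_+}$ is self-adjoint, $H$ is self-adjoint on $\Dom(H)=\Dom(T_{\Real^2_+})$, so exactly as for $T_{\Real^2_+}$ the min-max principle gives $\dist(0,\sigma(H))^2 = \inf_{0\ne u\in\Dom(H)} \|Hu\|^2/\|u\|^2$. Hence it suffices to prove $\|Hu\|^2 \ge \delta^2\|u\|^2$ for every $u=\begin{psmallmatrix}u_1\\u_2\end{psmallmatrix}\in\Dom(T_{\Real^2_+})$; this yields $\dist(0,\sigma(H))\ge\delta$, that is $\sigma(H)\subset(-\infty,-\delta]\cup[\delta,\infty)$.

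Writing $A:=-\partial_x^2+\delta+V$, the same smooth-core argument as in Theorem~\ref{formula} lets me compute on $C^\infty(\overline{\Real^2_+})\cap\Dom(T_{\Real^2_+})$, where
\begin{equation*}
\|Hu\|^2 = \|\partial_y u_1\|^2 + \|\partial_y u_2\|^2 + \|Au_1\|^2 + \|Au_2\|^2 + 2\Re (-i\partial_y u_1, Au_2) + 2\Re (Au_1, i\partial_y u_2).
\end{equation*}
Splitting $A = (-\partial_x^2+\delta) + V$, the contribution of $-\partial_x^2+\delta$ to the cross terms cancels precisely as in Theorem~\ref{formula}. For the remaining part I integrate by parts in $y$: since $V=V(x)$ is real and independent of $y$,
\begin{equation*}
2\Re \big[(-i\partial_y u_1, Vu_2) + (Vu_1, i\partial_y u_2)\big] = 2\Re\int_{\Real^2_+} iV\,\partial_y(\overline{u_1}u_2)\,\der x\,\der y = -2\Re\int_\Real iV(x)\,|u_1(x,0)|^2\,\der x = 0,
\end{equation*}
where I used the boundary condition $u_1=u_2$ on $y=0$ (so that $\overline{u_1}u_2\big|_{y=0}=|u_1(x,0)|^2$ is real) together with decay as $y\to\infty$. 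Thus all cross terms vanish and $\|Hu\|^2 = \|\partial_y u_1\|^2 + \|\partial_y u_2\|^2 + \|Au_1\|^2 + \|Au_2\|^2$.

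It then remains to bound each $\|Au_i\|^2$ from below. Expanding $A = (-\partial_x^2+V)+\delta$ and using that $-\partial_x^2+V$ is symmetric and nonnegative (as $V\ge 0$),
\begin{equation*}
\|Aw\|^2 = \|(-\partial_x^2+V)w\|^2 + 2\delta\big((-\partial_x^2+V)w, w\big) + \delta^2\|w\|^2 \ge \delta^2\|w\|^2
\end{equation*}
for $\delta\ge 0$, because $\big((-\partial_x^2+V)w,w\big)=\|\partial_x w\|^2+(Vw,w)\ge 0$. Applying this to $w=u_1$ and $w=u_2$ and discarding the nonnegative terms $\|\partial_y u_i\|^2$ gives $\|Hu\|^2\ge\delta^2(\|u_1\|^2+\|u_2\|^2)=\delta^2\|u\|^2$, which completes the argument.

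The main obstacle is the cross-term cancellation, that is, checking that the potential-dependent boundary contribution vanishes. This is exactly where the hypotheses enter: the potential must be independent of $y$ so that it survives the $\der y$-integration intact, and real, while the boundary condition $u_1=u_2$ forces the surviving boundary integrand $iV(x)|u_1(x,0)|^2$ to be purely imaginary, annihilating its real part. The positivity $V\ge 0$ (and $\delta\ge 0$) is then used only at the last step to retain $\|Au_i\|^2\ge\delta^2\|u_i\|^2$.
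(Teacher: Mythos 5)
Your proposal is correct and follows essentially the same route as the paper: expand $\|Hu\|^2$, cancel the cross terms via integration by parts in $y$ together with the boundary condition $u_1=u_2$, and bound $\|(-\partial_x^2+\delta+V)u\|^2\geq\delta^2\|u\|^2$ using $V\geq 0$. You merely supply the details (the vanishing of the $V$-dependent boundary term and the nonnegativity of $-\partial_x^2+V$) that the paper's computation leaves implicit.
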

\begin{proof}
By the hypothesis,  for any $u \in \Dom(H)$, we have
\begin{equation*}
\begin{aligned}
\|Hu\|^2 &= \|\Big(T_{\Real^2_+}+ V \begin{pmatrix}
0 & 1 \\
1 & 0
\end{pmatrix}\Big) u \|^2 \\
&= \|-i \partial_y u_{1} + (-\partial^2_x + \delta + V) u_{2}\|^2 + \|(-\partial^2_x + \delta + V)u_{1} + i \partial_y u_{2}\|^2 \,,\\
&= \|-i \partial_y u_{1}\|^2 + \|(-\partial^2_x + \delta + V) u_{2}\|^2 + 2\Re (-i \partial_y u_1, (-\partial^2_x + \delta + V) u_2)\\
& \quad + \|(-\partial^2_x + \delta + V)u_{1}\|^2 + \|i \partial_y u_{2}\|^2 
+ 2 \Re ((-\partial^2_x + \delta + V)u_{1}, i \partial_y u_{2}) \,,\\
&= \|\partial_y u\|^2 + \|(-\partial^2_x + \delta + V) u\|^2 \geq \delta^2 \|u\|^2\,.
\end{aligned}
\end{equation*}
It follows that $\sigma(H) \subset  (-\infty, -\delta] \cup [\delta, \infty)$ due to the min-max principle. 
This completes the proof.

\end{proof}

In order to establish sufficient conditions for the potentials to guarantee the existence of the discrete  spectrum of the operator, we introduce the following real vector space
$$L^\infty_0(\Real^2_+; \Real) := \{ u \in L^\infty (\Real^2_+; \Real) : u \mbox{ has compact support in  }  \Real^2_+   \} \,.
$$

\begin{Theorem}\label{discrete1}
There exists $V = V(x,y) \in L^\infty_0(\Real^2_+; \Real)$ such that the discrete spectrum of $H$ is not empty.
\end{Theorem}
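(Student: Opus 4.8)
The plan is to localize a bound state inside the spectral gap $(-\delta,\delta)$ by means of a \emph{separable, compactly supported} well and then to detect that bound state through the square of the operator. Since I may take $V$ compactly supported (so in particular $V\in L^\infty_0(\Real^2_+;\Real)$), the perturbation $V\begin{psmallmatrix}0&1\\1&0\end{psmallmatrix}$ is relatively compact with respect to $T_{\Real^2_+}$: the resolvent maps $\sii(\Real^2_+;\C^2)$ into $\Dom(T_{\Real^2_+})\subset H^1(\Real^2_+;\C^2)$, and multiplication by a compactly supported bounded function, followed by the Rellich compact embedding $H^1\hookrightarrow\sii$ on the bounded support of $V$, produces a compact operator. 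Hence Weyl's theorem gives $\sigma_{\textup{ess}}(H)=\sigma_{\textup{ess}}(T_{\Real^2_+})=(-\infty,-\delta]\cup[\delta,\infty)$, so every spectral point of $H$ lying in the open gap $(-\delta,\delta)$ is automatically an isolated eigenvalue of finite multiplicity. It therefore suffices to place one point of $\sigma(H)$ inside $(-\delta,\delta)$.

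To do so I would pass to $H^2$. By the spectral mapping theorem $\sigma_{\textup{ess}}(H^2)=[\delta^2,\infty)$, and $H^2\ge 0$; so by the min-max principle it is enough to exhibit a trial state $u\in\Dom(H)$ with $\|Hu\|^2<\delta^2\|u\|^2$, which forces $\inf\sigma(H^2)<\delta^2$ and hence an eigenvalue of $H$ in $(-\delta,\delta)$. The key simplification is to test with $u=\begin{psmallmatrix}\psi\\\psi\end{psmallmatrix}$, which lies in $\Dom(T_{\Real^2_+})=\Dom(H)$ because $u_1=u_2$ holds identically (in particular on $y=0$). Writing $A:=-\partial_x^2+\delta+V$, a short computation identical in structure to the one in the proof of Theorem~\ref{formula} shows that the off-diagonal cross terms cancel exactly, so that $\|Hu\|^2=2\|\partial_y\psi\|^2+2\|A\psi\|^2$ while $\|u\|^2=2\|\psi\|^2$; the test quotient thus reduces to $(\|\partial_y\psi\|^2+\|A\psi\|^2)/\|\psi\|^2$.

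The construction of $V$ and $\psi$ then decouples the two directions. First I would fix a one-dimensional compactly supported attractive potential $W\le 0$, $W\not\equiv0$, shallow enough that $-\partial_x^2+W$ possesses a ground state $E_0\in(-2\delta,0)$ with normalized eigenfunction $\phi_0\in H^2(\Real)$ (a bound state exists for any attractive $1$D well, and a weak-coupling well pushes $E_0$ arbitrarily close to $0^-$). I would then set $V(x,y):=W(x)\zeta(y)$ with $\zeta\in C_0^\infty[0,\infty)$, $0\le\zeta\le1$, $\zeta\equiv1$ on $[0,R]$ and $\supp\zeta\subset[0,R+1]$, so that $V\in L^\infty_0(\Real^2_+;\Real)$. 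For the trial state take $\psi(x,y):=\phi_0(x)\eta_R(y)$ with $\eta_R\equiv1$ on $[0,R-1]$, $\supp\eta_R\subset[0,R]$ and $\|\eta_R'\|^2/\|\eta_R\|^2=O(1/R)$. Because $\zeta\equiv1$ on $\supp\eta_R$, one gets $A\psi=\eta_R\,(-\partial_x^2+\delta+W)\phi_0=(E_0+\delta)\,\eta_R\phi_0$, whence the quotient equals $\|\eta_R'\|^2/\|\eta_R\|^2+(E_0+\delta)^2$. As $R\to\infty$ this tends to $(E_0+\delta)^2<\delta^2$, so for $R$ large enough $\|Hu\|^2<\delta^2\|u\|^2$, exactly as required; note that the $y$-cutoff $\zeta$ and the limit $R\to\infty$ are forced precisely by the demand that $V$ be genuinely of compact support rather than $y$-independent.

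The main obstacle I anticipate is the essential-spectrum step rather than the variational one: one must verify cleanly that the off-diagonal, compactly supported multiplication operator is relatively compact despite the anisotropy of $\Dom(T_{\Real^2_+})$ (first order in $y$, second order in $x$). This is resolved by the inclusion $\Dom(T_{\Real^2_+})\subset H^1(\Real^2_+;\C^2)$, which already suffices for the Rellich argument on the bounded support of $V$; the extra $\partial_x^2$-regularity is not even needed here. A secondary point to check is the strictness $E_0+\delta\in(-\delta,\delta)$, i.e. $E_0\in(-2\delta,0)$, which is guaranteed by choosing a sufficiently shallow well, under the standing assumption $\delta>0$ that a genuine gap exists.
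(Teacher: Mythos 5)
Your argument is correct and shares the paper's overall skeleton --- stability of the essential spectrum under the bounded, decaying perturbation, followed by a variational test of $H^2$ on a symmetric spinor $u=\begin{psmallmatrix}\psi\\\psi\end{psmallmatrix}$, for which the cross terms cancel exactly and $\|Hu\|^2=2\|\partial_y\psi\|^2+2\|(-\partial_x^2+\delta+V)\psi\|^2$ --- but your construction of the pair $(V,\psi)$ is genuinely different. The paper takes $\psi=u_1(x)u_2(y)$ to be the (trivially extended) Dirichlet ground state of a box $[a,b]^2$ and $V$ a constant on that box; the negativity condition then becomes a quadratic inequality in that constant, solvable precisely when $\lambda_1=\pi^2/(b-a)^2<\delta^2$, which yields an explicit admissible window $(V_1,V_2)$ of depths. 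You instead take a separable well $V=W(x)\zeta(y)$ and test with the exact one-dimensional Schr\"odinger ground state $\phi_0$ in $x$ times a long cutoff $\eta_R$ in $y$, so that $(-\partial_x^2+\delta+V)\psi=(E_0+\delta)\psi$ on the support and the Rayleigh quotient is computed in closed form, tending to $(E_0+\delta)^2<\delta^2$ as $R\to\infty$. Each route has its merits: the paper's gives concrete numerical ranges for the depth of a constant well on a fixed square, while yours incurs no error term from box walls and --- importantly --- produces a trial function that is manifestly in $\Dom(H)$ (since $\phi_0$ is smooth and exponentially decaying, all required derivatives lie in $L^2$), whereas the zero extension of the box eigenfunction has a second $x$-derivative with singular contributions along the lines $x=a$ and $x=b$, so the paper's test function needs an extra smoothing or quadratic-form argument to be fully legitimate. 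Your Rellich argument for $\sigma_{\textup{ess}}(H)=\sigma_{\textup{ess}}(T_{\Real^2_+})$, which uses only the inclusion $\Dom(T_{\Real^2_+})\subset H^1(\Real^2_+;\C^2)$, is also sound and more self-contained than the paper's appeal to the literature.
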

\begin{proof}
First of all, it is striking that $\sigma_{\textup{ess}}(H)= \sigma_{\textup{ess}}(T_{\Real^2_+})= (-\infty, -\delta] \cup [\delta, \infty)$ due to the boundedness and vanishing  at infinity of $V$ as discussed in \cite{KA,Reed,T92}.

We study on the Dirichlet Laplacian defined on a segment $[a,b]$ with $ 0 < a < b$. Denote by $(\lambda_1, v_1(x))$, $(\lambda_3, v_3(x))$, $(\lambda_1, u_2(y))$  the ground-state, the third eigenvalues and  eigenfunctions and the first eigenpair of the Dirichlet Laplacian with respect to variables $x, y$ determined on $[a,b]$ such that $\|v_1(x)\|_{L^2([a,b])} = \|v_3(x)\|_{L^2([a,b])} = \|u_2(y)\|_{L^2([a,b])} = 1$. As a result, $\lambda_1 = \frac{\pi^2}{(b-a)^2}, \lambda_3 = \frac{9\pi^2}{(b-a)^2} $. Thoroughly, $v_1(x), u_2(y)$ satisfy the following system with respect to $z=x$ and $z=y$
\begin{equation}\label{Dirichlet} 
\left\{
\begin{aligned}
  - \partial_z^2 u(z) &= \lambda_1 u(z)
  && \mbox{in} && [a,b] 
  \,,
  \\
  u(a) &= u(b) = 0 
  \,,
\end{aligned}
\right.
\end{equation}
and we also have
\begin{equation} 
\left\{
\begin{aligned}
  - v_3''(x) &= \lambda_3 v_3(x)
  && \mbox{in} && [a,b] 
  \,,
  \\
  v_3(a) &= v_3(b) = 0 
  \,.
\end{aligned}
\right.
\end{equation}
Moreover, we can select 
$$ v_1(x) = \sqrt{\frac{2}{b-a}} \sin \frac{\pi (x-a)}{b-a} \,, \quad\quad\quad \quad \quad\quad v_3(x) = \sqrt{\frac{2}{b-a}} \sin \frac{3\pi (x-a)}{b-a} \,.
$$
 Putting 
 $$ u_1(x) := \frac{-3}{\sqrt{10}} v_1(x) + \frac{1}{\sqrt{10}} v_3(x) \,.$$
It yields that
$\|u_1(x)\|_{L^2([a,b])} =1, u_1(a)= u_1(b) = 0$ and
\begin{equation}
\begin{aligned}
u_1'(x) &=  \frac{-3}{\sqrt{10}} v_1'(x) + \frac{1}{\sqrt{10}} v_3'(x) \,,\\
&= \frac{-3}{\sqrt{10}} \, \sqrt{\frac{2}{b-a}} \, \frac{\pi}{b-a} \, \cos \frac{\pi (x-a)}{b-a} + \frac{1}{\sqrt{10}} \, \sqrt{\frac{2}{b-a}} \, \frac{3\pi}{b-a} \, \cos \frac{3\pi (x-a)}{b-a} \,.\\
\end{aligned}
\end{equation}
As a result, we deduce that $$ u_1'(a) = \frac{-3}{\sqrt{10}} \, \sqrt{\frac{2}{b-a}} \, \frac{\pi}{b-a}  + \frac{1}{\sqrt{10}} \, \sqrt{\frac{2}{b-a}} \, \frac{3\pi}{b-a} = 0 \,,$$
$$ u_1'(b) = \frac{-3}{\sqrt{10}} \, \sqrt{\frac{2}{b-a}} \, \frac{\pi}{b-a} \, (-1) + \frac{1}{\sqrt{10}} \, \sqrt{\frac{2}{b-a}} \, \frac{3\pi}{b-a} \, (-1) = 0 \,.$$
Extending $u_1(x)$ to $\Real$
\begin{equation}
\tilde{u}_1(x) = \left\{\begin{array}{rcl}
	&u_1(x) &\mbox{ if } x \in [a,b]  \,,\\
	 &0     &\mbox{ if } x \in (-\infty, a) \cup (b, \infty)  \,.\\
\end{array}\right.
\end{equation}
It is apparent that $\tilde{u}_1 \in H^1(\Real)$ due to the fact that the smooth function $u_1(x)$ vanishes at $a$ and $b$. Furthermore,  for all $\varphi \in C^\infty_0(\Real)$, we have
\begin{equation}
\begin{aligned}
|(\tilde{u}'_1, \varphi')_{L^2(\Real)}| &= |\int_{[a,b]} u'_1(x) \varphi'(x) \, \der x |
= |\int_{[a,b]} - u_1''(x) \varphi (x) \, \der x  + (u_1'\varphi)|_a^b \, | \,,\\
&= |\int_{[a,b]} - u_1''(x) \varphi (x) \, \der x | \leq \|u_1''\|_{L^2([a,b])} \|\varphi\|_{L^2(\Real)} \,.
\end{aligned}
\end{equation}
It follows that $\tilde{u}_1''(x) \in L^2(\Real)$. In particular,
\begin{equation}
\begin{aligned}
u''_1(x) &= \frac{-3}{\sqrt{10}} v_1''(x) + \frac{1}{\sqrt{10}} v_3''(x) 
=  \frac{3}{\sqrt{10}} \lambda_1 v_1 - \frac{1}{\sqrt{10}} \lambda_3 v_3  \,.\\
\end{aligned}
\end{equation}

Putting $\psi(x,y) := u_1(x) u_2(y), v := \begin{psmallmatrix} \psi \\ \psi \end{psmallmatrix}$. In order to reduce notations, we can identity $v$ with the trivial extension of $v$ to $\Real^2$ and thus, we have $ v \in \Dom(H)$. It is obvious that $\|\psi\|=1, \|v\|= \sqrt{2}$.

Now we choose $V \in L^\infty_0(\Real^2_+; \Real)$ such that $V$ is a constant in $[a,b]\times [a,b]$, we compute
\begin{equation}
\begin{aligned}
\|Hv\|^2 &= 2\|\partial_y\psi\|^2 + 2\|(-\partial_x^2 + \delta + V) \psi\|^2 \,,\\
&= 2\|u_1(x)\|_{L^2(a,b)}^2 \|\partial_y u_2\|_{L^2(a,b)}^2 +  2\|u_2(y)\|^2_{L^2(a,b)} \|u_1''(x) + (\delta + V)u_1\|_{L^2(a,b)}^2 \,,\\
&= 2\lambda_1 + 2\|(\frac{3}{\sqrt{10}} \lambda_1 v_1 - \frac{1}{\sqrt{10}} \lambda_3 v_3) + (\delta + V)(\frac{-3}{\sqrt{10}} v_1 + \frac{1}{\sqrt{10}} v_3)\|_{L^2(a,b)}^2 \,,\\
&= 2 \lambda_1 + 2 \large\left( \frac{9}{10}\lambda_1^2 + \frac{1}{10} \lambda_3^2 + (\delta +V)^2 - (\delta + V) (\frac{9}{5}\lambda_1 + \frac{1}{5} \lambda_3)  \large\right) \,,\\
&= 2 \lambda_1 + 2 \large\left( 9 \lambda_1^2 + (\delta + V)^2  -\frac{18}{5} \lambda_1 (\delta + V)       \large\right) \,.
\end{aligned}
\end{equation}

 It yields that
\begin{equation}
\begin{aligned}
\|Hv\|^2 - \delta^2 \|v\|^2 &= 2\lambda_1 + 18\lambda_1^2 - \frac{36}{5} \delta\lambda_1 -\frac{36}{5}  \lambda_1 V + 4 \delta V + 2 V^2 \ \,,\\
&= 2V^2 + 4(-\frac{9}{5}\lambda_1 + \delta) V + 2 \lambda_1 -\frac{36}{5}  \delta\lambda_1 + 18 \lambda_1^2 \,.
\end{aligned}
\end{equation}
The reduced discriminant of this quadratic trinomial equals $$ 4(-\frac{9}{5}\lambda_1 + \delta)^2 - 4 \lambda_1 + \frac{72}{5}  \lambda_1\delta - 36 \lambda_1^2 = 4 \delta^2 - 4 \lambda_1 - \frac{576}{25} \lambda_1^2.$$
If
\begin{equation}\label{condition}
 \lambda_1 + \frac{576}{100} \, \lambda_1^2  < \frac{\pi^2}{(b-a)^2} + 6 \, \frac{\pi^4}{(b-a)^4}   < \delta^2 \,
\end{equation} 
  then the corresponding quadratic equation has two solutions $V_1, V_2 : V_1 < V_2$. Therefore, for any $V \in (V_1, V_2)$ in $[a,b]^2$ we have  $\|Hv\|^2 - \delta^2 \|v\|^2 < 0$. It deduces the existence of the discrete spectrum of $H^2$ as well as $H$. It concludes the proof.

\end{proof}
This result provides us a local selection for the potential $V$ defined in an arbitrary square in $\Real_+ \times \Real_+$ with the aim of establishing the existence of the discrete spectrum of the semi-Dirac semi-Laplacian problem in the half-space. It can be seen that the rigorous relation between $\delta$ and $b-a$ described in \eqref{condition} can be supplemented by a numerical evidence.
Figure \ref{condi} illustrates the validity region of \eqref{condition}.
\begin{figure}[h]
  \begin{center}
  \includegraphics[scale=0.8]{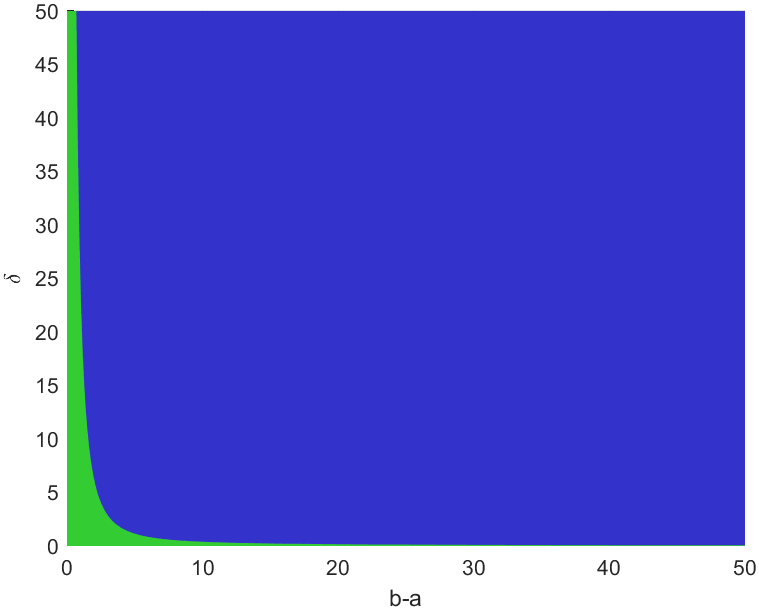}
  \caption{The blue colour indicates the validity region of \eqref{condition}.
  }
  \label{condi}
  \end{center}
\end{figure}
\begin{Remark}
Based on our analysis, it is apparent that the local establishment for $V$ can be alternatively determined in an appropriate rectangle in $\Real^2_+$ by virtue of employing the separation of variables for trial functions. Nevertheless, the setting of that in squares in $\Real_+ \times \Real_+$ perhaps completely fulfills our principle purposes.
\end{Remark}

Now we establish perturbed operators  defined in \cite{KA} to analyse the spectral stability in small perturbations, we denote 
$$W := \begin{pmatrix}
w_{11} & w_{12} \\
w_{21} & w_{22}
\end{pmatrix} \,, \quad \quad  H_\epsilon := T_{\Real^2_+} + \epsilon W ,,$$
where $w_{12}, w_{21} : \Real^2_+ \rightarrow \C$ in $L^\infty(\Real^2_+)$, $w_{ii} \in L^\infty(\Real^2_+; \Real)$ for  $ i = 1,2$, $w_{ij}$ vanishes at infinity for every $ i, j = 1,2$ and $\Dom(H_\epsilon):= \Dom(T_{\Real^2_+})$.

Recall that if $w_{12}= \overline{w_{21}}$ then $H_\epsilon$ is self-adjoint \cite{KA} and $\sigma_{\textup{ess}}(H_\epsilon) = \sigma_{\textup{ess}}(T_{\Real^2_+})= (-\infty, -\delta] \cup [\delta, \infty)$ due to the boundedness and vanishing  at infinity of $W$. This result can be seen as the stability of the essential spectrum for the perturbed operators.

Considering  a sequence of approximate functions of $1$ in $L^2(\Real^2_+)$ as analogously defined in \cite{KA}, we choose $g(t) \in C^\infty([0,1]; \Real) : 0 \leq g \leq 1 \, \forall t \in [0,1]$,
\begin{equation*}
g(t)= \left\{\begin{array}{rcl}
	0 &\mbox{ if } t \in [0, \frac{1}{9}]  \,,\\
	1 &\mbox{ if } t \in [\frac{1}{2},1]  \,,\\
\end{array}\right.
\end{equation*}
and $g_n(x,y) \in C^\infty(\Real^2_+), n \geq 1$ 
\begin{equation}\label{cutoff}
g_n(x,y)= \left\{\begin{array}{rcl}
	&1 &\mbox{ if } \sqrt{x^2+y^2} \leq n  \,,\\
	&g(\frac{\ln\frac{n^2}{\sqrt{x^2+y^2}}}{\ln n}) &\mbox{ if } n < \sqrt{x^2 + y^2} < n^2 \,,\\
	&0 &\mbox{ if } \sqrt{x^2+y^2} \geq n^2  \,.\\
\end{array}\right.
\end{equation}
\begin{Lemma}\label{approximate func}
$g_n \xrightarrow[n \to \infty]{} 1$ pointwise and $\partial_x g_n \xrightarrow[n \to \infty]{} 0, \, \, \partial_x^2g_n \xrightarrow[n \to \infty]{} 0, \, \,  \partial_yg_n \xrightarrow[n \to \infty]{} 0$ in $L^2(\Real^2_+)$.
\end{Lemma}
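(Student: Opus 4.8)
The plan is to exploit the logarithmic profile of the cut-off. Write $r:=\sqrt{x^2+y^2}$ and note that $g_n\equiv 1$ on $\{r\le n\}$ and $g_n\equiv 0$ on $\{r\ge n^2\}$, so every partial derivative of $g_n$ is supported in the annular region $A_n:=\{n<r<n^2\}\cap\Real^2_+$. The pointwise statement is then immediate: a fixed $(x,y)$ satisfies $r\le n$ for all sufficiently large $n$, whence $g_n(x,y)=1$ eventually, so $g_n\to 1$ pointwise.

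For the first derivatives I would reduce the chain-rule bookkeeping by setting $s_n:=\frac{\ln(n^2/r)}{\ln n}=2-\frac{\ln r}{\ln n}$, so that $g_n=g(s_n)$ on $A_n$. A direct computation gives $\nabla s_n=-\frac{1}{r^2\ln n}(x,y)$, hence $|\nabla s_n|=\frac{1}{r\ln n}$ and $|\nabla g_n|\le \frac{\sup_{[0,1]}|g'|}{r\ln n}$ on $A_n$. Passing to polar coordinates (the angular aperture in the half-plane contributes a factor at most $\pi$) yields
\begin{equation*}
\|\partial_x g_n\|^2+\|\partial_y g_n\|^2 \le \frac{\pi\,(\sup_{[0,1]}|g'|)^2}{(\ln n)^2}\int_n^{n^2}\frac{1}{r^2}\,r\,\der r
=\frac{\pi\,(\sup_{[0,1]}|g'|)^2}{(\ln n)^2}\,\ln n=\frac{\pi\,(\sup_{[0,1]}|g'|)^2}{\ln n}\xrightarrow[n\to\infty]{}0 \,,
\end{equation*}
which settles $\partial_x g_n$ and $\partial_y g_n$ at once.

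For the second derivative I would differentiate once more, $\partial_x^2 g_n=g''(s_n)(\partial_x s_n)^2+g'(s_n)\,\partial_x^2 s_n$, and use the elementary bounds $|\partial_x s_n|\le (r\ln n)^{-1}$ together with $|\partial_x^2 s_n|\le 3(r^2\ln n)^{-1}$, the latter coming from $\partial_x(x/r^2)=(r^2-2x^2)/r^4$. Since $(\ln n)^{-2}\le(\ln n)^{-1}$ for $n\ge 3$, this produces a pointwise estimate $|\partial_x^2 g_n|\le C\,(r^2\ln n)^{-1}$ on $A_n$, with $C$ depending only on the sup-norms of $g'$ and $g''$. Integrating again in polar coordinates gives
\begin{equation*}
\|\partial_x^2 g_n\|^2\le\frac{\pi C^2}{(\ln n)^2}\int_n^{n^2}\frac{1}{r^4}\,r\,\der r
=\frac{\pi C^2}{2(\ln n)^2}\Big(\frac{1}{n^2}-\frac{1}{n^4}\Big)\xrightarrow[n\to\infty]{}0 \,,
\end{equation*}
completing the three claims.

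I expect no genuine obstacle here; the one point requiring care is that the transition must be carried out on the logarithmic scale $n<r<n^2$. A cut-off spread over a linear annulus such as $n<r<2n$ would leave $\|\nabla g_n\|$ bounded away from zero, so it is precisely the logarithmic profile that turns the radial integral $\int r^{-1}\,\der r$ over $(n,n^2)$ into a factor $\ln n$ which is then killed by the $(\ln n)^{-2}$ prefactor. The product-rule computation for $\partial_x^2 g_n$ is the most tedious step, but every resulting term decays at least like $(\ln n)^{-1}$, so the conclusion is robust.
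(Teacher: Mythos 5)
Your proposal is correct and follows essentially the same route as the paper: both observe that all derivatives are supported on the logarithmic annulus $n<r<n^2$, compute $\nabla g_n$ and $\partial_x^2 g_n$ via the chain rule, and pass to polar coordinates so that the radial integral produces a single factor $\ln n$ that is dominated by the $(\ln n)^{-2}$ prefactor. The only difference is cosmetic: you bound $g'$ and $g''$ by their sup-norms where the paper evaluates the resulting integrals exactly as $\int_0^1|g'(t)|^2\,\der t$ via the substitution $t=\ln(n^2/r)/\ln n$, and both yield the same $O(1/\ln n)$ decay.
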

\begin{proof}
The pointwise convergence is obvious due to the definition of $g_n$.

Compute  partial derivatives of $g_n$, we have
$$ \partial_x g_n(x,y) = \left\{\begin{array}{rcl}
	&0 &\mbox{ if } \sqrt{x^2+ y^2} \in [0, n] \cup [ n^2, \infty)  \,,\\
	&-\frac{1}{\ln n} \, \frac{x}{x^2+y^2} \, g'(\frac{\ln\frac{n^2}{\sqrt{x^2+y^2}}}{\ln n}) &\mbox{ if } \sqrt{x^2+y^2} \in (n,n^2)  \,,\\
\end{array}\right.
$$
$$ \partial_y g_n(x,y) = \left\{\begin{array}{rcl}
	&0 &\mbox{ if } \sqrt{x^2+ y^2} \in [0, n] \cup [ n^2, \infty)  \,,\\
	&-\frac{1}{\ln n} \, \frac{y}{x^2+y^2} \, g'(\frac{\ln\frac{n^2}{\sqrt{x^2+y^2}}}{\ln n}) &\mbox{ if } \sqrt{x^2+y^2} \in (n,n^2)  \,,\\
\end{array}\right.
$$
$$ \partial_x^2 g_n(x,y) = \left\{\begin{array}{rcl}
	&0 &\mbox{ if } \sqrt{x^2+ y^2} \in [0, n] \cup [ n^2, \infty)  \,,\\
	& f(x,y) &\mbox{ if } \sqrt{x^2+y^2} \in (n,n^2)  \,,\\
\end{array}\right.
$$
where $$ f(x,y) = \frac{1}{\ln^2 n} \, \frac{x^2}{(x^2+y^2)^2} \, g''(\frac{\ln\frac{n^2}{\sqrt{x^2+y^2}}}{\ln n}) + \frac{1}{\ln n} \, \frac{x^2 - y^2}{(x^2+y^2)^2} \, g'(\frac{\ln\frac{n^2}{\sqrt{x^2+y^2}}}{\ln n}) \,.$$
Passing to polar coordinates, we estimate
\begin{equation*}
\begin{aligned}
 \int_{\Real^2_+} |\partial_x g_n|^2 \der x \der y &= \int_0^\pi \int_n^{n^2} \frac{\cos^2\varphi}{r^2\ln^2n } |g'(\frac{\ln n^2 - \ln r}{\ln n})|^2 r \der r \der\varphi \,,\\
 &= \frac{\pi}{2} \frac{1}{\ln n} \int_0^1 |g'(t)|^2 \der t \xrightarrow[n \to \infty]{} 0 \,,
\end{aligned}
\end{equation*}
\begin{equation*}
\begin{aligned}
 \int_{\Real^2_+} |\partial_y g_n|^2 \der x \der y &= \int_0^\pi \int_n^{n^2} \frac{\sin^2\varphi}{r^2\ln^2n } |g'(\frac{\ln n^2 - \ln r}{\ln n})|^2 r \der r \der\varphi \,,\\
 &= \frac{\pi}{2} \frac{1}{\ln n} \int_0^1 |g'(t)|^2 \der t \xrightarrow[n \to \infty]{} 0 \,.
\end{aligned}
\end{equation*}
\begin{equation*}
\begin{aligned}
 \int_{\Real^2_+} |\partial_x^2 g_n|^2 \der x \der y &\leq 2 \int_0^\pi \int_n^{n^2} \frac{\cos^4\varphi}{r^4\ln^4n } |g''(\frac{\ln n^2 - \ln r}{\ln n})|^2 r \der r \der\varphi \,\\
 &\quad + 2 \int_0^\pi \int_n^{n^2} \frac{\cos^22\varphi}{r^4\ln^2n }|g'(\frac{\ln n^2 - \ln r}{\ln n})|^2 r \der r \der\varphi \,,\\
 &= \frac{3\pi}{4} \frac{1}{n^2\ln^3 n} \int_0^1 |g''(t)|^2 \der t  + \frac{\pi}{n^2 \ln n} \int_0^1 |g'(t)|^2 \der t \xrightarrow[n \to \infty]{} 0 \,.
\end{aligned}
\end{equation*}
Hence, the proof is completed.

\end{proof}

Putting
$$ A_\epsilon := \int_{\Real^2_+} (\epsilon^2 w_{11}^2 + \epsilon^2 w_{12}^2 + 4\delta\epsilon \Re w_{12} + \epsilon^2w_{21}^2 + \epsilon^2 w_{22}) \, \der x \der y  \,,$$
we consider the existence of the discrete spectrum of $H_\epsilon$ depending on the sign of $A_\epsilon$.
\begin{Theorem}\label{stability}
Let  $w_{ij} \in L^1(\Real^2_+) \cap L^\infty(\Real^2_+), w_{ij}= \overline{w_{ji}} $ for  $i,j = 1,2$ and vanishing at infinity. If
$A_\epsilon < 0$ then the discrete spectrum  of $H_\epsilon$  is non-empty.
\end{Theorem}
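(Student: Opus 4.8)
The plan is to use the variational (min--max) characterisation of the bottom of the spectrum of the nonnegative self-adjoint operator $H_\epsilon^2$. Since $H_\epsilon$ is self-adjoint with $\sigma_{\textup{ess}}(H_\epsilon) = (-\infty,-\delta]\cup[\delta,\infty)$, we have $\sigma_{\textup{ess}}(H_\epsilon^2) = [\delta^2,\infty)$, so that $\inf\sigma_{\textup{ess}}(H_\epsilon^2)=\delta^2$ and the gap $(-\delta,\delta)$ of $H_\epsilon$ corresponds to the interval $[0,\delta^2)$, which is free of essential spectrum. Hence it suffices to produce one admissible $\psi\in\Dom(H_\epsilon)$, $\psi\neq0$, with $\|H_\epsilon\psi\|^2<\delta^2\|\psi\|^2$; the min--max principle then gives $\inf\sigma(H_\epsilon^2)<\delta^2$, and being strictly below the essential spectrum this infimum is an isolated eigenvalue $\mu_1$, whence $\pm\sqrt{\mu_1}\in(-\delta,\delta)$ is an eigenvalue of $H_\epsilon$ and the discrete spectrum in $-(\delta,\delta)$ is non-empty.

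For the trial function I would take the regularised constants of Lemma~\ref{approximate func} and set
$$
\psi_n := \begin{psmallmatrix} g_n \\ g_n \end{psmallmatrix},
$$
each component being the same real, smooth, compactly supported $g_n$, so that $g_n\in H^1(\Real^2_+)$, $\partial_x^2 g_n\in\sii(\Real^2_+)$ and the matching condition $u_1=u_2$ on $y=0$ holds automatically; thus $\psi_n\in\Dom(H_\epsilon)$ and $\|\psi_n\|^2=2\|g_n\|^2$.

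Writing $S_n:=(-\partial_x^2+\delta)g_n$ and $Q_n:=\partial_y g_n$ (both real), the two components of $H_\epsilon\psi_n$ are $S_n-iQ_n+\epsilon(w_{11}+w_{12})g_n$ and $S_n+iQ_n+\epsilon(w_{21}+w_{22})g_n$. I would expand the sum of the two $\sii$-norms, using $w_{21}=\overline{w_{12}}$; the cross terms pairing the real $S_n$ with $\mp iQ_n$ vanish pointwise because $S_nQ_n$ is real. The only divergent contribution is $\int(\delta g_n)^2$ in each component, i.e.\ exactly $\delta^2\|\psi_n\|^2$, so subtracting $\delta^2\|\psi_n\|^2$ removes it. In the remainder, every term carrying a derivative of $g_n$ (such as $\int(\partial_x^2 g_n)^2$, $\int g_n\partial_x^2 g_n=-\int(\partial_x g_n)^2$, $\int(\partial_y g_n)^2$, and the mixed $\int w_{ij}\,g_n\,\partial_x^2 g_n$, $\int w_{ij}\,g_n\,\partial_y g_n$) tends to $0$ by Lemma~\ref{approximate func}, the mixed ones being dominated by $\|w_{ij}\|_{\sii}\|\partial_x^2 g_n\|$ and $\|w_{ij}\|_{\sii}\|\partial_y g_n\|$ with $\|w_{ij}\|_{\sii}<\infty$ since $w_{ij}\in L^1\cap L^\infty\subset\sii$. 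The surviving zeroth-order terms $\int w_{ij}g_n^2$ and $\int w_{ij}^2 g_n^2$ converge, by dominated convergence ($0\le g_n\le1$, $g_n\to1$, with dominations $|w_{ij}|,w_{ij}^2\in L^1$), to $\int w_{ij}$ and $\int w_{ij}^2$. Collecting these limits gives
$$
\lim_{n\to\infty}\big(\|H_\epsilon\psi_n\|^2-\delta^2\|\psi_n\|^2\big)=A_\epsilon.
$$

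Therefore, if $A_\epsilon<0$ then $\|H_\epsilon\psi_n\|^2<\delta^2\|\psi_n\|^2$ for all large $n$, and the reduction of the first paragraph yields an eigenvalue of $H_\epsilon$ in $(-\delta,\delta)$. I expect the delicate step to be the bookkeeping of this limit: one must verify that the unique divergent piece is precisely $\delta^2\|\psi_n\|^2$ (so that it cancels) and that every remaining term has a finite or vanishing limit --- which is exactly where the integrability $w_{ij}\in L^1(\Real^2_+)\cap L^\infty(\Real^2_+)$ and the three convergences of Lemma~\ref{approximate func} enter. A secondary point worth stating explicitly is that an element of $\sigma(H_\epsilon^2)$ lying strictly below $\inf\sigma_{\textup{ess}}(H_\epsilon^2)=\delta^2$ is automatically an isolated eigenvalue of finite multiplicity, guaranteeing that the corresponding spectral point of $H_\epsilon$ falls in the open gap $(-\delta,\delta)$ rather than merely on its closure.
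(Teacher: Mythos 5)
Your proposal is correct and follows essentially the same route as the paper: the same trial spinors $\psi_n = \begin{psmallmatrix} g_n \\ g_n \end{psmallmatrix}$ built from the cut-off functions of Lemma~\ref{approximate func}, the same limit $\|H_\epsilon\psi_n\|^2 - \delta^2\|\psi_n\|^2 \to A_\epsilon$, and the same variational conclusion. You are in fact more explicit than the paper about the final spectral step, namely passing from $\inf\sigma(H_\epsilon^2)<\delta^2=\inf\sigma_{\textup{ess}}(H_\epsilon^2)$ to an isolated eigenvalue of $H_\epsilon$ in the gap $(-\delta,\delta)$.
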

\begin{proof}
Denote $$ \psi_n := \begin{psmallmatrix} g_n(x,y) \\ g_n(x,y) \end{psmallmatrix} \,,$$ 
here $g_n$ is previously defined in \eqref{cutoff}. Thus, it deduces that $\psi_n \in \Dom(H_\epsilon)$, we have
\begin{equation}
\begin{aligned}
\|H\psi_n\|^2 - \delta^2 \|\psi_n\|^2 &= \|(-i \partial_y  + \epsilon w_{11}) g_n + (-\partial^2_x + \delta + \epsilon w_{12}) g_{n}\|^2 \\
&\quad  + \|(-\partial^2_x + \delta + \epsilon w_{21})g_{n} + (i \partial_y + \epsilon w_{22}) g_n\|^2  - 2 \delta^2\|g_n\|^2\,,\\
&\xrightarrow[n \to \infty]{} \int_{\Real^2_+} (\epsilon^2 w_{11}^2 + \epsilon^2 w_{12}^2 + 4\delta\epsilon \Re w_{12} + \epsilon^2w_{21}^2 + \epsilon^2 w_{22})  \, \der x \der y  \,,
\end{aligned}
\end{equation}
due to the dominated convergence theorem and Lemma \ref{approximate func}.

If $A_\epsilon < 0$ then there exist $n* \in \mathbb{N}, B_\epsilon < 0$ such that for all $ n \geq n*$ : 
$\|H\psi_n\|^2 - \delta^2 \|\psi_n\|^2 < B_\epsilon < 0$. As a result, the discrete spectrum of $H_\epsilon$ is non-empty or there exists at least one eigenvalue  in $(-\delta, \delta)$ of $H_\epsilon$. It concludes the proof.

\end{proof}
Theorem \ref{stability} demonstrates the spectral instability for a sequence of operators $H_\epsilon$ when there always exists  an electromagnetic potential $W$ such that $\sigma(H_\epsilon) \neq \sigma(H) = \sigma(H_{\epsilon=0})$.

Another comment referring to the case $ w_{11}=w_{22}=0, \Real\ni w_{12}=w_{21} \geq 0 $ is if there exists the discrete spectrum of $H_\epsilon$ then that has to be embedded eigenvalues lying in the essential spectrum  of $H_\epsilon$. Consequently, the discrete spectrum of the operator is empty in this case.
This result is  derived by using the analogous arguments given in \cite{KA}.
\begin{Corollary}\label{discrete2}
If $\int_{\Real^2_+}\Re w_{12} \ \der x \der y < 0$ then there exists $\epsilon(\delta) > 0$ such that for all $\epsilon \in (0, \epsilon(\delta))$ the discrete spectrum of $H_\epsilon$ is non-empty.
\end{Corollary}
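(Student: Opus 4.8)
The plan is to deduce the statement directly from Theorem~\ref{stability}, whose hypothesis $A_\epsilon<0$ I will verify for all sufficiently small positive $\epsilon$. Recall that
$$
A_\epsilon=\int_{\Real^2_+}\big(\epsilon^2 w_{11}^2+\epsilon^2 w_{12}^2+4\delta\epsilon\,\Re w_{12}+\epsilon^2 w_{21}^2+\epsilon^2 w_{22}\big)\,\der x\,\der y,
$$
so the entire question reduces to a sign analysis of this quantity as a function of the coupling constant $\epsilon$, with $\delta$ fixed. I keep in mind throughout the standing assumption $w_{12}=\overline{w_{21}}$ under which $H_\epsilon$ is self-adjoint, so that talking about its discrete spectrum is legitimate.

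First I would separate the linear and quadratic contributions in $\epsilon$. Writing
$$
A_\epsilon=4\delta\epsilon\int_{\Real^2_+}\Re w_{12}\,\der x\,\der y+\epsilon^2\int_{\Real^2_+}\big(w_{11}^2+w_{12}^2+w_{21}^2+w_{22}\big)\,\der x\,\der y=:4\delta\epsilon\,C_1+\epsilon^2 C_2,
$$
I note that $C_2$ is a finite constant: the hypothesis $w_{ij}\in L^1(\Real^2_+)\cap L^\infty(\Real^2_+)$ gives $|w_{ij}|^2\le\|w_{ij}\|_{L^\infty}\,|w_{ij}|\in L^1(\Real^2_+)$, so each quadratic term is integrable, while $w_{22}\in L^1(\Real^2_+)$ directly. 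By the standing assumption $C_1=\int_{\Real^2_+}\Re w_{12}\,\der x\,\der y<0$, and since $\delta>0$ the linear coefficient $4\delta C_1$ is strictly negative, independently of the sign of $C_2$.

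With this decomposition the conclusion is immediate. Factoring $A_\epsilon=\epsilon\,(4\delta C_1+\epsilon C_2)$, the bracket tends to $4\delta C_1<0$ as $\epsilon\to 0^+$ and is continuous in $\epsilon$, so it remains strictly negative on some interval $(0,\epsilon(\delta))$; explicitly one may take $\epsilon(\delta)=-4\delta C_1/C_2>0$ when $C_2>0$ and $\epsilon(\delta)$ arbitrary otherwise. For every $\epsilon\in(0,\epsilon(\delta))$ one then has $\epsilon>0$ together with $4\delta C_1+\epsilon C_2<0$, hence $A_\epsilon<0$, and Theorem~\ref{stability} produces a point of the discrete spectrum of $H_\epsilon$ inside $(-\delta,\delta)$. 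The linear growth of $\epsilon(\delta)$ in $\delta$ and its strict positivity both follow from $C_1<0$.

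There is no serious analytic obstacle: the corollary is essentially a threshold computation layered on top of Theorem~\ref{stability}. The only points demanding care are recording that the $L^1\cap L^\infty$ control forces the quadratic coefficient $C_2$ to be finite, and ensuring the resulting threshold $\epsilon(\delta)$ is genuinely positive, which is exactly where the hypothesis $\int_{\Real^2_+}\Re w_{12}\,\der x\,\der y<0$ (and $\delta>0$) enters decisively.
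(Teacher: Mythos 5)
Your proposal is correct and follows essentially the same route as the paper: both verify the hypothesis $A_\epsilon<0$ of Theorem~\ref{stability} by splitting $A_\epsilon$ into a strictly negative linear-in-$\epsilon$ part (coming from $4\delta\epsilon\int_{\Real^2_+}\Re w_{12}<0$) and a quadratic remainder, and the paper's explicit threshold $\epsilon(\delta)=-4\delta\int_{\Real^2_+}\Re w_{12}\,/\,(\|w_{11}\|^2+\|w_{22}\|^2+\|w_{12}\|^2+\|w_{21}\|^2)$ is exactly your $-4\delta C_1/C_2$. Your additional remarks on the finiteness of the quadratic coefficient via $L^1\cap L^\infty$ and on the degenerate case $C_2\le 0$ are fine and only make the argument more complete.
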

\begin{proof}
The proof is directly deduced from Theorem \ref{stability}. Here we can choose 
$$ \epsilon(\delta) := \frac{-4 \delta \int_{\Real^2_+}\Re w_{12}}{\|w_{11}\|^2 + \|w_{22}\|^2 + \|w_{12}\|^2 + \|w_{21}\|^2 } \, .$$
It is apparent that   \quad \\
$$\lim_{\delta\rightarrow 0} \epsilon(\delta) =0, \quad \quad \quad\quad \quad \lim_{\delta\rightarrow \infty} \epsilon(\delta) = \infty \, .$$
The proof is completed.

\end{proof}
It is noteworthy that the sufficient conditions posed on $\Re w_{12}$ in Corollary \ref{discrete2} and in Theorem \ref{discrete1} are not identical. The former relies on the integral of $\Re w_{12}$ in $\Real^2_+$, while the latter requires  rigorous values of $V = \Re w_{12}$ in a certain range $[a,b]^2 \subset\subset \Real^2_+$.

\begin{Remark}
The spectral properties of the semi-Dirac semi-Laplacian are unstable and mismatched if we alter the half space $\Real^2_+$ into the other one $(0, \infty)\times\Real$ due to the fact that the corresponding semi-Dirac operator with Dirichlet boundary conditions  is also closed, symmetric but not self-adjoint in the half space $(0, \infty)\times\Real$. It can be seen as a distinction between that of the semi-Dirac semi-metals and  Dirac operators under some specific sorts of boundary conditions.
\end{Remark}

\subsection*{Acknowledgment}
The author is grateful to D. Krej\v{c}i\v{r}\'ik for useful comments and was partially supported by the EXPRO grant No.~20-17749X
of the Czech Science Foundation.

%

\begin{thebibliography}{10}
\bibitem{Adam}  R.A. Adams, \emph{ Sobolev Spaces}. Pure and Applied Mathematics. Academic Press, Inc., Cambridge (1975).

\bibitem{Arrizibalaga-LeTreust-Raymond_2018}
N.~Arrizibalaga, L.~Le~Treust, and N.~Raymond, \emph{Extension operator for the {MIT} bag model}, Ann. Fac. Sci.
  Toulouse Math. \textbf{1}(2020), 135-147.
\bibitem{Ba} S. Banerjee, R. R. P. Singh, V. Pardo, and W. E. Pickett, \emph{Tight-Binding Modeling and Low-Energy Behavior of the Semi-Dirac Point},
Phys. Rev. Lett. \textbf{103}(2009), 016402

\bibitem{Benguria-Fournais-Stockmeyer-Bosch_2017b}
R.~D. Benguria, S.~Fournais, E.~Stockmeyer, and H.~Van Den~Bosch,
  \emph{Self-adjointness of two-dimensional {D}irac operators on domains}, Ann.
  Henri Poincar{\'e} \textbf{18} (2017), 1371--1383.
  
\bibitem{Evans}  L.C. Evans,\emph{ Partial Differential Equations}. 2nd Edition, Department of Mathematics, University of California, Berkeley, American Mathematical Society (2010). 
  

\bibitem{JN} J. Ne\v{c}as, \emph{Direct methods in the theory of elliptic equation}, Springer (2012).

  
\bibitem{KA}
D.~Krej\v{c}i\v{r}\'ik, P.~R.~S. Antunes, \emph{Bound states in semi-Dirac semi-metals}, Phys. Lett. A \textbf{386} (2021), 126991.



\bibitem{Grisvard} P. Grisvard, \emph{Elliptic Problems in Nonsmooth Domains}, Pitman, Boston (1985).




 \bibitem{Reed} 
M.~Reed and B.~Simon, \emph{
Methods of Modern Mathematical Physics IV. Analysis of Operators},
Academic Press, New York-London (1978).  
  
 \bibitem{T92}
B.~Thaller, \emph{
The {D}irac equation.
Texts and Monographs in Physics}, Springer-Verlag, Berlin (1992). 
\bibitem{YS} Yinming Shao, Seongphill Moon, A. N. Rudenko, Jie Wang, Mykhaylo Ozerov, David Graf, Zhiyuan Sun, Raquel Queiroz, Seng Huat Lee, Yanglin Zhu, Zhiqiang Mao, M. I. Katsnelson, Dmitry Smirnov, Andrew. J. Millis, D. N. Basov, 
\emph{Semi-Dirac Fermionsin a Topological Metal
}, 	arXiv:2311.03735 [cond-mat.mtrl-sci](2023).
  
\bibitem{Tuyen} T. Vu, \emph{Spectral inequality for Dirac right triangles}, J. Math. Phys. \textbf{64} (2023) 041502.
\end{thebibliography}
%

\providecommand{\bysame}{\leavevmode\hbox to3em{\hrulefill}\thinspace}
\providecommand{\MR}{\relax\ifhmode\unskip\space\fi MR }
\providecommand{\MRhref}[2]{%
  \href{http://www.ams.org/mathscinet-getitem?mr=#1}{#2}
}
\providecommand{\href}[2]{#2}

\end{document}